\documentclass[sn-mathphys-ay]{sn-jnl}

\usepackage[dvipsnames]{xcolor}
\usepackage{graphicx}%
\usepackage{multirow}%
\usepackage{amsmath,amssymb,amsfonts}%
\usepackage{amsthm}%
\usepackage{mathrsfs}%
\usepackage[title]{appendix}%
\usepackage{xcolor}%
\usepackage{textcomp}%
\usepackage{manyfoot}%
\usepackage{booktabs}%
\usepackage{algorithm}%
\usepackage{algorithmicx}%
\usepackage{algpseudocode}%
\usepackage{listings}%
\usepackage{subcaption}



\theoremstyle{thmstyleone}%
\newtheorem{theorem}{Theorem}
\newtheorem{proposition}{Proposition}

\theoremstyle{thmstyletwo}%
\newtheorem{example}{Example}%
\newtheorem{remark}{Remark}%

\theoremstyle{thmstylethree}%
\newtheorem{definition}{Definition}%
\newtheorem{problem}{Problem}%

\raggedbottom

\newcommand{\lang}{\mathcal{L}}

\newcommand{\mask}{\mathcal{M}}

\usepackage{caption}
\usepackage{subcaption}

\begin{document}

\title[]{Enhancing sensor attack detection in supervisory control systems modeled by probabilistic automata}



\author*[1]{\fnm{Parastou} \sur{Fahim}}\email{pbf5107@psu.edu}

\author[2,3]{\fnm{Samuel} \sur{Oliveira}}\email{samuel.oliveira@unifap.br}

\author[1]{\fnm{R\^omulo} \sur{Meira-G\'oes}}\email{romulo@psu.edu}


\affil[1]{\orgdiv{School of Electrical Engineering and Computer Science}, \orgname{The Pennsylvania State University}, \orgaddress{\city{University Park}, \state{PA}, \country{USA}}}

\affil[2]{\orgdiv{Graduate Program in Electrical Engineering}, \orgname{Santa Catarina State University (UDESC)}, \orgaddress{\city{Joinville}, \state{SC}, \country{Brazil}}}

\affil[3]{\orgdiv{Department of Exact and Technological Sciences}, \orgname{Federal University of Amapá (UNIFAP)}, \orgaddress{\city{Macapá}, \state{AP}, \country{Brazil}}}



\abstract{
Sensor attacks compromise the reliability of cyber-physical systems (CPSs) by altering sensor outputs with the objective of leading the system to unsafe system states. 
This paper studies a probabilistic intrusion detection framework based on $\lambda$-sensor-attack detectability ($\lambda$-sa), a formal measure that evaluates the likelihood of a system being under attack based on observed behaviors.
Our framework enhances detection by extending its capabilities to identify multiple sensor attack strategies using probabilistic information, which enables the detection of sensor attacks that were undetected by current detection methodologies. 
We develop a polynomial-time algorithm that verifies $\lambda$-sa detectability by constructing a weighted verifier automaton and solving the shortest path problem. 
Additionally, we propose a method to determine the maximum detection confidence level ($\lambda$*) achievable by the system, ensuring the highest probability of identifying attack-induced behaviors.  
}

\keywords{Discrete Event Systems; Supervisory Control; Cybersecurity; Intrusion Detection}



\maketitle
\graphicspath{{Figs/}}

\section{Introduction}
\vspace{-0.5em}
Cyber-physical systems (CPSs) integrate physical and computation processes using communication networks to enable the monitoring and control of these processes \citep{allgower2019}. 
In recent years, CPSs have seen an increase in their complexity and dependence on communication networks to ensure safe operation.
However, these new advancements led to new vulnerabilities that cyber attacks can exploit in CPS, e.g., \citep{Farwell:2011,Checkoway:2011,Greenberg:2020,Easterly:2023}.
As key components of CPSs,  sensors and actuators are particularly susceptible to attacks that can compromise the integrity and safety of these systems. 
Therefore, detecting and preventing attacks on sensors and actuators is crucial to ensure the secure and safe operation of CPS.



In this paper, we analyze the security of CPSs at the supervisory control layer in the hierarchical control architecture.
Hence, we use the discrete event modeling formalism of probabilistic discrete event systems (PDES), where system operation and communications are event-based with known transition probabilities and the controller is a supervisor \citep{Lawford:1993,Garg:1999,Lafortune:2021}. 
In this context, the supervisor controls the CPS via actuator commands based on the observation of events generated by sensor readings. 
Based on event-driven models, we assume that an attacker infiltrates and manipulates the sensor communication channels between the plant and the supervisor; this type of attack is known as a sensor deception attack\footnote{Sensor attacks for short.}.
We study the design of \emph{Probabilistic Intrusion Detection (ID) Systems} to detect sensor deception attacks in CPSs. 

In the domain of discrete event systems (DES), several works focused on dealing with cyber attacks, e.g., see \citep{Rashidinejad:2019,hadjicostis2022cybersecurity, OLIVEIRA2023100907}. Among these efforts, ID systems have received significant attention in recent years, particularly for their role in identifying and mitigating such attacks \citep{Thorsley:2006,Carvalho:2018,Lima:2019,meira-goes:2020towards,wang:2022,fritz2023detection,zhang2023robust,lin2024diagnosability, li2025diagnosability, kang2025diagnosability}. 
An ID system monitors the presence of attacks by analyzing the behavior of the controlled system. 
However, the state-of-the-art ID systems in DES has mainly focused in analyzing the controlled behavior qualitatively using logical DES models, i.e., models without probabilities or other quantitative metric.
For this reason, the current ID systems in DES are ineffective against ``smarter attacks" such as stealthy/covert sensor deception attacks \citep{Su:2018,meira-goes:2020synthesis,tong:2022,YAO2024deception}. 

Since ID systems play a crucial role in CPS, there is a need to develop quantitative frameworks for the detection of sensor attacks to complement the logical approach. 
This leads to an important question: \\
\emph{How can we find a quantitative measure to better detect sensor attacks, indicating the certainty that a given behavior originates from the system under attack?}

Our previous works addressed this question by adopting a stochastic framework and proposing a probabilistic ID method based on a property termed $\epsilon$-safety \citep{meira-goes:2020towards, Fahim2024-wodes}. 
This property captures the system's ability to analyze and identify whether an \emph{deterministic} sensor attack strategy \emph{drastically} modifies the probability of the controlled system. 
In other words, $\epsilon$-safety determines if the attacker leaves a \emph{probabilistic footprint} while attacking the controlled system. 
The parameter $\epsilon$ represents the confidence level that an observed behavior is more likely to have been generated by the system under attack than by the nominal system (a system operating under normal conditions). 

In this paper, we extend this problem by exploring a generalization on $\epsilon$-safety. 
Rather than focusing on a specific deterministic sensor attack strategy, we address a wider scope of attack strategies, including \emph{all complete, consistent, and successful} sensor attack strategies.  
While the $\epsilon$-safety framework was limited to detecting a \emph{single sensor attack strategy}, our enhanced approach introduces the notion of $\lambda$-sensor-attack detectability ($\lambda$-sa, for short), which expands the detection capability to handle \emph{multiple} sensor attack strategies. 
This generalization ensures a more robust and adaptable defense mechanism against a wider range of potential threats.

Based on the notion of $\lambda$-sensor-attack detectability, we formulate two novel problems: (i) verifying $\lambda$-sa detectability, and (ii) searching for the largest $\lambda$ such that $\lambda$-sa detectability holds. 
Moreover, we present a \emph{polynomial-time} complexity algorithm to solve these two problems. 
The solution methodology is based on methods from DES and graph theory and consists of two steps. 
First, we construct a structure called the \emph{weighted verifier}, which includes information about both the nominal system and the system under attack. 
This structure is inspired by the verifier automaton used for fault diagnosis purposes, which combines information from faulty and non-faulty systems \citep{yoo2002polynomial}.
By using the weights in the verifier automaton and solving the \emph{shortest path problem} \citep{cormen2022introduction}, we find the string with the least $\lambda$ value.
Specifically, this is achieved by finding the shortest path from the initial state to the marked states within the weighted verifier. 
Solving the shortest path problem over the verifier provides the correct value to check the $\lambda$-sa detectability.

The contributions of this paper are as follows: 
\begin{enumerate}
\item[(1)] The novel definition of $\lambda$-sensor-attack detectability that can ensure probabilistic detection for all complete, consistent, and successful sensor attack strategies. 
\item[(2)] Two new problems formulation based on the $\lambda$-sa detectability. The verification problem of $\lambda$-sa detectability and an optimal value problem to ensure $\lambda$-sa detectability.
\item[(3)] A polynomial-time solution methodology that solves both problems.
\end{enumerate}
The remainder of the paper is organized as follows. Section~\ref{sect:motivation} introduces a motivating example for the problem addressed in this study. 
Section~\ref{sect:sup} reviews the necessary definitions of supervisory control and PDES under sensor deception attacks. 
Section~\ref{sect:preliminaries} presents the modeling of the system under sensor attacks and also the class of all complete, consistent, and successful sensor attack strategies. 
The concept of $\lambda$-sa detectability and two verification problems are formulated in Section~\ref{sect:problem}. 
In Section~\ref{sect:solution}, we describe our polynomial-time complexity solution for the two verification problems. We conclude the paper in Section~\ref{sect:conclusion}. 

\section{Motivation Example} \label{sect:motivation}
Inspired by the problem described in \citep{meira-goes:2020towards}, we consider, as a motivating example, a scenario in which two vehicles are traveling on an infinite, discretized road. In this scenario, the vehicle in front, referred to as \emph{adv veh}, is manually driven, i.e., its actions are uncontrollable. 
The vehicle behind, referred to as \emph{ego veh}, is autonomous.
The two vehicles move in the same direction with an initial distance of 2 units between them.
When the relative distance between the vehicles becomes zero, a collision occurs.

\begin{figure}[h]
    \centering
    \includegraphics[width=0.65\textwidth]{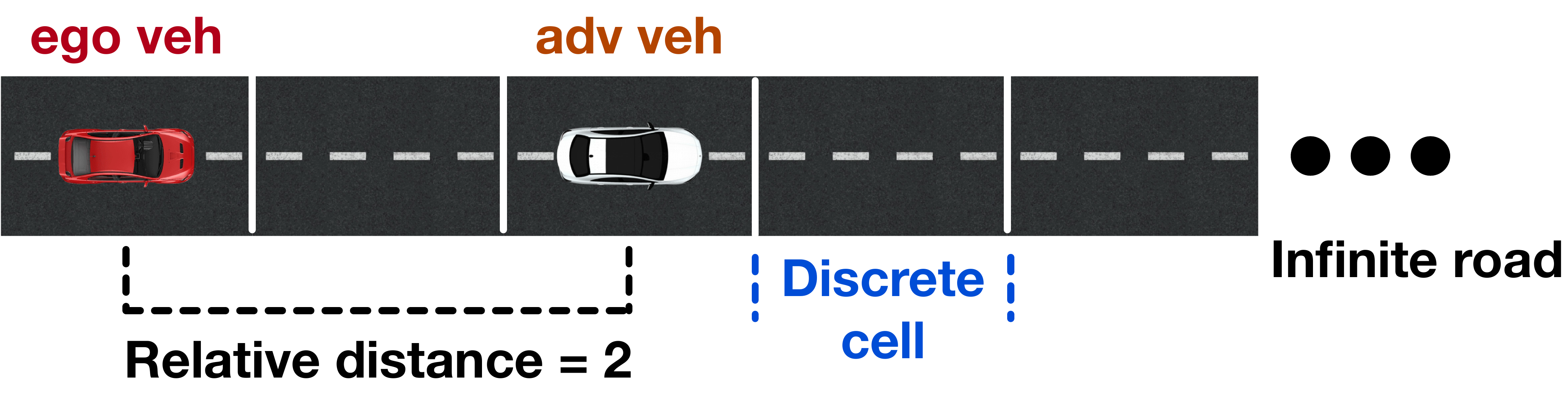} 
    \caption{Overview of discrete collision avoidance modeling for autonomous vehicles} 
    \label{Example} 
\end{figure}

\noindent \textbf{Controlled system.} 
The goal of the $ego$ controller is to avoid collision with $adv$ by measuring the relative distance between the two cars.
Based on a discrete-state event-driven model of this system, we can use standard supervisory control theory techniques to synthesize a controller that ensures no collision.
Intuitively, this safe controller prevents $ego$ from moving ahead when the relative distance between the cars is equal to one. 

\noindent \textbf{System under attack.} 
Let us consider that $ego$ has been compromised by a sensor attacker.
The attacker hijacks the sensors in $ego$ aiming to cause a collision between $ego$ and $adv$. 
The attacker might use two different attack strategies to cause the collision between the cars.
The first attack strategy, $att_1$, will immediately insert a fictitious reading of relative distance equal to $2$ when $ego$ is only $1$ cell away from $adv$. 
In this manner, the controller allows $ego$ to move forward and collide with $adv$. 
On the other hand, the second attack strategy, $att_2$, only makes the insertion the second time $ego$ and $adv$ are $1$ cell away.
The first attack strategy ``eagerly" changes the nominal behavior to reach a collision whereas the second attack strategy allows the nominal behavior to happen before changing it.

\noindent \textbf{ID systems.}
Using probabilistic information about the system allows detection analysis beyond the logical ID systems.
We discuss ID mechanisms options below. 

\noindent \textbf{(1) Logical ID systems.} 
Logical ID mechanisms rely on monitoring the behavior of the controlled system to determine whether an attack has occurred or not. 
These mechanisms identify a sensor attack when the observed behavior deviates from the nominal behavior \citep{Carvalho:2018, Lima:2019, lin2024diagnosability}. 
Since a relative distance of $2$ is possible in the collision avoidance system, both $att_1$ and $att_2$ strategies feed the ID with nominal behavior.
Thus, logical ID systems cannot detect strategies $att_1$ and $att_2$.

\noindent \textbf{(2) Probabilistic $\epsilon$-safety ID system.} 
Our previous works \citep{meira-goes:2020towards, Fahim2024-wodes} provided a probabilistic approach to detect a sensor attack strategy using the notion of $\epsilon$-safety. 
This notion compares the probability of generating a nominal behavior versus an attacked one.
According to \citep{Fahim2024-wodes}, this system is $\epsilon$-safe with respect to strategy $att_1$ for a confidence level of $0.9$.
The ID detects $att_1$ since the attacker eagerly inserts the fictitious event without considering its probability changes.
However, $\epsilon$-safety regarding strategy $att_1$ does not provide any information about detecting strategy $att_2$.
We need to verify $att_2$ to guarantee that $\epsilon$-safety holds, i.e., a new run of the verification procedure.

\noindent \textbf{(2) Probabilistic $\lambda$-sensor-attacks ID system.} 
We present an enhanced probabilistic ID approach that goes beyond detecting a specific sensor attack strategy. 
The notion of $\lambda$-sensor-attacks detectability can effectively detect all possible sensor attack strategies.
In Section~\ref{sect:solution}, we show that the collision avoidance system in Fig.~\ref{Example} is $\lambda$-sensor-attacks detectable with confidence level $0.9$.
It means that both $att_1$ and $att_2$ can be detected when using probabilistic information of the system.

\section{Modeling of Controlled Systems}\label{sect:sup} 
\subsection{Supervisory Control}\label{subsect:des}
We consider the supervisory layer of a feedback control system, where the uncontrolled system (plant) is modeled as a Deterministic Finite-State Automaton (DFA) in the discrete-event modeling formalism.
A DFA is defined by $G := (X_G,\Sigma,\delta_G,x_{0,G},X_{m,G})$, where $X_G$ is the finite set of states, $\Sigma$ is the finite set of events, $\delta_G:X_G\times\Sigma\rightarrow X_G$ is the partial transition function, $x_{0,G}$ is the initial state, and $X_{m,G}$ is the set of marked states.
The function $\delta_G$ is extended, in the usual manner, to the domain $X_G\times\Sigma^*$. 
The language and the marked language generated by $G$ are defined by $\lang(G) := \{s \in \Sigma^*\mid \delta_G(x_{0,G},s)!\}$ and $\lang_m(G) := \{s \in \lang(G)\mid \delta_G(x_{0,G},s)\in X_{m,G}\}$, where $!$ means that the function is defined.

For convenience, we define $\Gamma_G(x) := \{e\in\Sigma\mid\delta_G(x,e)!\}$ as the set of feasible events in state $x\in X_G$.
For string $s$, the length of $s$ is denoted by $|s|$ whereas $s[i]$ denotes the $i^{th}$ event of $s$ such that $s = s[1]s[2]\ldots s[|s|]$.
The $i^{th}$ prefix of $s$ is defined by $s^i$, i.e., $s^i = s[1]s[2]\ldots s[i]$ and $s^0 = \epsilon$. 

Considering the supervisory control theory of DES \citep{Ramadge:1987}, a \emph{supervisor} controls the plant $G$ by dynamically disabling events.
The limited actuation capability of the supervisor is modeled by partitioning the event set $\Sigma$ into the sets of controllable and uncontrollable events, $\Sigma_{c}$ and $\Sigma_{uc}$.
The supervisor cannot disable uncontrollable events. 
Therefore, the supervisor's control decisions are limited to the set $\Gamma:=\{\gamma\subseteq\Sigma\mid\Sigma_{uc} \subseteq \gamma\}$.
Formally, a supervisor is a mapping $S:\lang(G)\rightarrow\Gamma$ defined to satisfy specifications on $G$, e.g., avoid unsafe states in $G$, avoid deadlocks.

The closed-loop behavior of $G$ under the supervision of $S$ is denoted by $S/G$ and generates the closed-loop languages $\lang(S/G)$ and $\lang_m(S/G)$; see, e.g., \citep{Lafortune:2021,Wonham:2018}.
Herein, we assume that supervisor $S$ is realized by an automaton $R = (X_R,\Sigma,\delta_R,x_{0,R})$, i.e., $S(s) = \Gamma_R(\delta_R(x_{0,R},s))$.
With an abuse of notation, we use $S$ and $R$ interchangeably hereafter.

\subsection{Stochastic Supervisory Control}\label{subsect:sdes}
We consider a stochastic DES modeled as a probabilistic discrete event system (PDES) defined similarly to a DFA \citep{Lawford:1993,Garg:1999,Pantelic:2014}. 
A PDES is defined by the tuple $G := (X_G,\Sigma,\allowbreak \delta_G, P_G,x_{0,G},X_{m,G})$ where $X_G$, $\Sigma$, $\delta_G$, $x_{0,G}$ and $X_{m,G}$ are defined as in the DFA definition.
The probabilistic transition function $P_G:X_G\times\Sigma\times X_G\rightarrow [0,1]$ specifies the probability of moving from state $x$ to state $y$ with event $e$. 
Hereupon, we will use the notation $G$ to describe a PDES.

In this work, we assume that $G$ is deterministic where $\nexists y,y^*\in X_G$, $y^*\neq y$ such that $P_G(x,e,y)>0$ and $P_G(x,e,y^*)>0$.
In this manner, $\delta_G(x,e)= y$ if and only if $P_G(x,e,y) > 0$.
Moreover, we assume that each state in $G$ transitions with probability $1$ or deadlocks, i.e.,  $\sum_{e\in \Sigma} \sum_{y\in X_H} P_G(x,e,y) \in \{0,1\}$ for any $x \in X_G$.

\begin{example}  
Figure~\ref{fig:plant_G} provides an example of PDES $G$ modeling the collision avoidance motivating example described in Sect.~\ref{sect:motivation}.
The PDES $G$ has four states and three events.
States in $G$ model the relative distance among $ego$ and $adv$ whereas events $a,\ b$, and $c$ model the decrease, increase, and no change, respectively, on this relative distance.
The cars collide when the relative distance is equal to zero.
For simplicity, once the relative distance is greater than or equal to three, $adv$ escapes $ego$'s range.
States $0$ and $3$ are deadlock states, e.g., $\sum_{e\in \Sigma} P_G(0,e,\delta_G(x,e)) = 0$.
On the other hand, states $1$ and $2$ are live states with probability $1$.
The dynamics of $G$ is captured by the transitions in Fig.~\ref{fig:plant_G} where the label describes the event and probability of the transition, respectively.
For instance, transition $2\rightarrow 1$ has probability $0.1$ of occurring, e.g, $P_G(2,a,1) = 0.1$.
\end{example}

\begin{figure}[thpb]
\centering
\includegraphics[width=0.4\columnwidth]{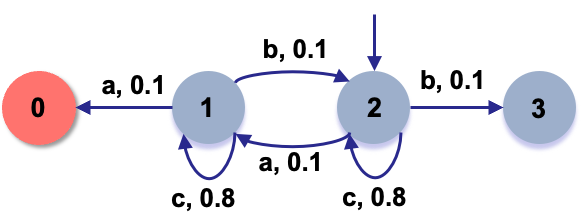}
\caption{PDES $G$ collision avoidance}
\label{fig:plant_G}
\vspace{-2em}
\end{figure}

Although the language and marked language of PDES $G$ are defined as in the DFA case, the notion of probabilistic languages (p-languages) of a PDES was introduced to characterize the probability of executing a string \citep{Garg:1999}.
The p-language of $G$, $L_p(G):\Sigma^*\rightarrow [0,1]$, is recursively defined for $s\in\Sigma^*$ and $e\in\Sigma$ as: 
$L_p(G)(\epsilon) := 1$, $L_p(G)(se) := L_p(G)(s)P_G(x,e,y)$ if $x=\delta_G(x_{0,G},s)$,  $e\in \Gamma_G(x)$ and $y = \delta_G(x,e)$, and $0$ otherwise.
\begin{example}  
Continuing with our running example, we can obtain the probabilistic language of PDES $G$ in Fig.~\ref{fig:plant_G}.
For instance, the probability of executing string $abc \in \lang(G)$ is inductively computed by $P_G(2,a,1)P_G(1,b,2)P_G(2,c,2) = 0.1\times 0.1\times 0.8$.
\end{example}

For convenience, we write $P_G(x,e)$ to denote $P_G(x,e,\delta_G(x,e))$ whenever $\delta_G(x,e)!$, i.e., the probability of executing $e$ in state $x$.

In the context of stochastic supervisory control theory, the plant $G$ is controlled by a supervisor $S$ as described in Section~\ref{subsect:des}.
In this work, we use the framework of supervisory control of PDES introduced by \citep{Kumar:2001}.
The supervisor is \emph{deterministic} and realized by DFA $R = (X_R,\Sigma,\allowbreak \delta_R,x_{0,R})$ as previously-described.
And although $R$ is deterministic, its events disablement proportionally increases the probability of the enabled ones.
Given a state $x\in X_G$, a state $y \in X_R$, and an event $e \in \Gamma_G(x)\cap\Gamma_R(y)$, the probability of $e$ being executed is given by the standard normalization:
\begin{equation}\label{eq:renormalization}
P_{R,G}((x_R,x_G),e,(y_R,y_G)) = \frac{P_G(x_G,e)}{\sum_{\sigma\in\Gamma_G(x_R)\cap\Gamma_R(y_R)}P_G(x_G,\sigma)}
\end{equation}
Due to this renormalization, the closed-loop behavior $R/G$ generates a p-language different, in general, than the p-language of $G$.
We can represent the closed-loop behavior $R/G$ as PDES $R||_p G$ where $||_p$ is defined based on the parallel composition $||$ and Eq.~\ref{eq:renormalization}.
The formal definition of $||_p$ is described in Appendix~\ref{app:parallel_prob}.

\begin{example}
Regarding our running example, the supervisor $R$ for PDES $G$ is shown in Fig.~\ref{fig:sup_R}.
This supervisor ensures that the plant never reaches state $0$.
The closed-loop representation of $R/G$ is shown in Fig.~\ref{fig:M_n}.
The supervisor disables event $a$ in state $1$ which results in disabling event $a$ in state $1$ in the plant.
For this reason, events from state $(1,1)$ in $R/G$ have been normalized based on Eq.~\ref{eq:renormalization}.
For example, $P_{R,G}((1,1),b,(2,2)) = \frac{0.1}{0.9}= 0.111\dots$.
\end{example}
\begin{figure}[thpb]
\begin{subfigure}[t]{0.45\columnwidth}
\centering
\includegraphics[width=0.75\columnwidth]{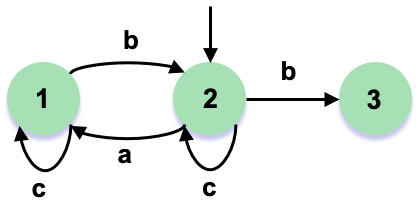}
\caption{Supervisor $R$ collision avoidance}
\label{fig:sup_R}
\end{subfigure}
\ 
\begin{subfigure}[t]{0.45\columnwidth}
\centering
\includegraphics[width=0.75\columnwidth]{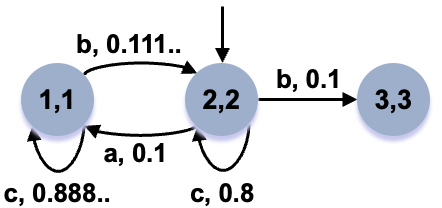}
\caption{Supervised system $R/G=R||_pG$}
\label{fig:M_n}
\end{subfigure}
\caption{Supervisor $R$ and controlled system $R/G$}
\label{fig:supervisor-controlled}
\vspace{-2em}
\end{figure}

For simplicity, we assume that the plant $G$ has one critical state, denoted $x_{crit} \in X_G$.
Moreover, we assume that every transition to $x_{crit}$ is controllable, i.e., $\delta_G(x,e) = x_{crit} \Rightarrow e\in \Sigma_c$ for any $x\in X_G$.
These assumptions are without loss of generality as we can generalize it to any regular language by state space refinement in the usual way \citep{Cho:1989,Lafortune:2021}. 
Supervisor $R$ ensures the critical state is not reachable in $R/G$ as in our running example.

\section{Modeling of Control Systems under Sensor Attacks}\label{sect:preliminaries}
In a cyber-security context, we assume that the supervisory control system can be under sensor attack.
In sensor attacks, an attacker hijacks and controls a subset of the sensors to reach the critical state.
In this manner, this attacker modifies the closed-loop behavior of the system.
In this section, we review the supervisory control under sensor deception attacks as in \citep{Su:2018,meira-goes:2020synthesis,meira-goes:2023dealing}.
We focus on explaining the probabilistic attacker model and defining the control system under attack.

\subsection{Sensor Attacker} 
We follow the probabilistic sensor attacker model introduced in \citep{meira-goes:2021synthesis}.
A sensor deception attacker compromises a subset of sensor events, denoted $\Sigma_a\subseteq \Sigma$.
This attacker can modify the readings of these compromised events by inserting fictitious events into or deleting event readings from the supervisor.
To identify the attacker actions, insertion and deletion events are modeled using the sets $\Sigma_i = \{ins(e) \mid e \in \Sigma_a\}$ and $\Sigma_d=\{del(e)\mid e \in \Sigma_a\}$, respectively.
The union of the insertion, deletion, and plant event sets, $\Sigma_m = \Sigma\cup\Sigma_i\cup\Sigma_d$, encompasses the event set of the system under attack. 
Formally, the attacker is defined as:

\begin{definition}[Attack strategy]\label{def:attack_str}
An attack strategy with compromised event set $\Sigma_a$ is defined as a partial map $A: \Sigma_m^* \times (\Sigma\cup \{\epsilon\})\rightarrow \Sigma_m^*$ that satisfies for any $t \in \Sigma^*_m$ and $e \in \Sigma\cup \{\epsilon\}$:
\begin{enumerate}
    \item $A(\epsilon,\epsilon) \in \Sigma_{i}^*$ and $A(t,\epsilon) = \epsilon$ for $t\neq \epsilon$
    \item If $e\in \Sigma_a$, then $A(t,e)\in \{e,del(e)\}\Sigma_i^*$
    \item If $e\in \Sigma\setminus\Sigma_a$, then $A(t,e)\in \{e\}\Sigma_i^*$
\end{enumerate}
\end{definition}
The attack strategy $A$ defines a deterministic action based on the last event executed $e$ and modification history $t$.
We extend the function $A$ to concatenate these modifications for any string $s\in \Sigma^*$: $A(\epsilon) = A(\epsilon,\epsilon)$ and $A(s) = A(s^{|s|-1})A(A(s^{|s|-1}),s[|s|])$.
With an abuse of notation, we assume that attack strategy $A$ is encoded as a DFA $A = (X_A, \delta_A, \Sigma_m, x_{0,A})$ as in \citep{meira-goes:2020synthesis,meira-goes2021synthesistac}.
In Appendix~\ref{app:A_aut}, we show the conditions for this encoding.
Intuitively, each state of $A$ encodes one attacker's decision: insertion, deletion, or no attack. 

To identify how the attack actions affect the plant $G$ and supervisor $S$, we define projection operators to reason about events in $\Sigma_m$ in different contexts.
We define projector operator $\Pi^G$ ($\Pi^S$) that projects events in $\Sigma_m$ to events in $\Sigma$ generated by the plant (observed by the supervisor).
Formally, $\Pi^G$ outputs the event that is executed in $G$, i.e., $\Pi^G(ins(e)) = \varepsilon$ and $\Pi^G(del(e)) = \Pi^G(e) = e$. 
Similarly, $\Pi^S$ outputs the event observed by the supervisor, i.e, $\Pi^S(del(e)) = \varepsilon$ and $\Pi^S(ins(e)) = \Pi^S(e) = e$. 

\begin{example}
Figure~\ref{fig:attack-strategies} illustrates two attack strategies encoded as automata $A_1$, Fig.~\ref{fig:A1}, and $A_2$, Fig.~\ref{fig:A2}.
Since attack strategies act on observations from the plant, the encoding defines attack strategies for string $s\in \lang(A_1)$ such that $s[|s|] \in \Sigma\cup\Sigma_d$, i.e., the last event is the event observation.
For example, strings $b, a\text{ins}(b)\text{del}(b) \in \lang(A_1)$ define attack strategy $A_1(\epsilon,b)$ and $A_1(a\text{ins}(b),b)$.
After observing $s$, the attack strategy is defined for the last observed event in $s$ followed by any possible insertion in the automaton.
For instance, string $ca\in \lang(A_1)$ defines the attack strategy $A_1(c,a) = a\text{ins}(b)$ since $ca\text{ins}(b) \in \lang(A_1)$ and $\nexists e \in \Sigma_i$ such that $ca\text{ins}(b)e \in \lang(A_1)$.
Similarly, for string $aa\in \lang(A_2)$, the strategy is $A_2(a,a) = a\text{ins}(b)$.
On the other hand, for string $aa\text{ins}(b)\text{del}(b)$, the strategy is $A_2(aa\text{ins}(b),b) = \text{del}(b)$ since state $2$ does not have any feasible insertion event.

Both attack strategies $A_1$ and $A_2$ insert event $b$ when the relative distance between the vehicles is $1$.
The key difference between attackers $A_1$ and $A_2$ is when each attacker inserts event $b$. 
Attacker $A_1$ always inserts $b$ whenever the relative distance is $1$, i.e., immediately after observing $a$ the attacker inserts.
On the other hand, attacker $A_2$ inserts $b$ after the second time the relative distance is $1$.
In Fig.~\ref{fig:A2}, the attacker inserts event $b$ only after two events $a$ happened, i.e., the second time the relative distance among the cars is equal to $1$ as in Fig.~\ref{fig:M_n}. 


\end{example}

\begin{figure}[thpb]
\begin{subfigure}[t]{0.45\columnwidth}
\centering
\includegraphics[width=0.7\columnwidth]{A1.png}
\caption{Attacker Strategy $A_1$}
\label{fig:A1}
\end{subfigure}
\ 
\begin{subfigure}[t]{0.45\columnwidth}
\centering
\includegraphics[width=0.87\columnwidth]{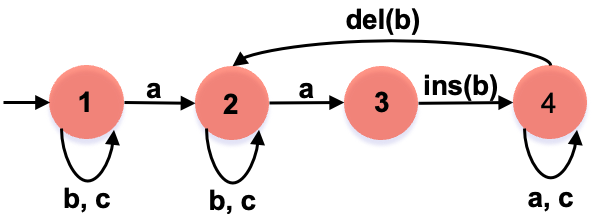}
\caption{Attacker Strategy $A_2$}
\label{fig:A2}
\end{subfigure}
\caption{Two Attack strategies $A_1$ and $A_2$}
\label{fig:attack-strategies}
\vspace{-2em}
\end{figure}

\subsection{Controlled System under Attack}

The sensor attacker disrupts the \emph{nominal} controlled system $R/G$.
A new controlled behavior is generated when the attack function $A$ is placed in the communication channel between the plant and the supervisor.
We define a new supervisor, denoted by $S_A$, that composes $S$ with the attack strategy $A$.
\begin{definition}[Attacked supervisor]
Given supervisor $S$, a set of compromised events $\Sigma_a\subseteq \Sigma$, and an attack strategy $A$. The attacked supervisor is defined for $s\in \Sigma^*$: 
\begin{equation}
S_A(s) = (S\circ \Pi^S\circ A)(s)
\end{equation}
\end{definition}
Based on $S_A$ and $G$, the closed-loop system language under attack is defined as $\mathcal{L}(S_A/G)\subseteq \lang(G)$ as in \citep{Lafortune:2021}. 
The system $S_A/G$ denotes the closed-loop system language under attack, or simply the \emph{attacked system}.
Moreover, $S_A/G$ also generates a p-language in the same manner as $R/G$.

\begin{example}
We compute the controlled language of the attacked systems $S_{A_1}/G$ and $S_{A_2}/G$, for $A_1$ and $A_2$ as shown in Fig.~\ref{fig:attack-strategies}.
Figure~\ref{fig:SA/G} depicts these two controlled systems under attack.
Let us analyze the controlled system under attack strategy $A_1$. 
At the beginning, $A_1$ does not insert any event, i.e., $A_1(\epsilon,\epsilon) = \epsilon$.
In this manner, $S_{A_1}(\epsilon) = S(\epsilon) = \{a,b,c\}$. 
For this reason, events $a,b$, and $c$ are present in the initial of the automaton in Fig.~\ref{fig:SA1/G}.

Next, let us assume that the plant executed event $a$ for which $A_1(\epsilon, a) = a\text{ins}(b)$.
Recall that after event $a$ the distance between the cars is $1$, i.e., state $1$ in $G$ in Fig.~\ref{fig:plant_G}. 
Attack strategy $A_1$ makes the supervisor return to state $2$ with $S_{A_1}(a) = S(ab) = \{a,b,c\}$, i.e., all events are enabled when the plant is in state $1$.
Thus, the controlled system under attack $A_1$ behaves exactly as in $G$ (open-loop), Fig.~\ref{fig:plant_G}.
A similar analysis can be made for the controlled system under attack $A_2$, Fig.~\ref{fig:SA2/G}.
However, the first time the plant reaches state $1$ in $G$, the controlled system behaves normally as in Fig.\ref{fig:M_n}.
After that, the controlled system under attack $A_2$ behaves as in $G$.

\end{example}
\begin{figure}[thpb]
\begin{subfigure}[t]{0.45\columnwidth}
\centering
\includegraphics[width=0.85\columnwidth]{Figs/plant_G.png}
\caption{$S_{A_1}/G$}
\label{fig:SA1/G}
\end{subfigure}
\ 
\begin{subfigure}[t]{0.45\columnwidth}
\centering
\includegraphics[width=0.87\columnwidth]{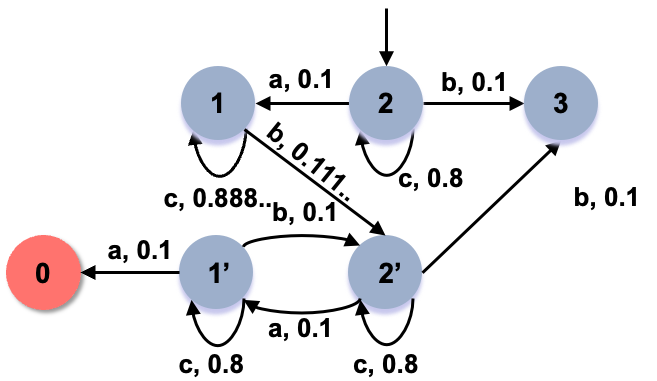}
\caption{$S_{A_2}/G$}
\label{fig:SA2/G}
\end{subfigure}
\caption{Controlled system under attack}
\label{fig:SA/G}
\vspace{-2em}
\end{figure}

\subsection{Class of Attack Strategies}

So far, we have used the general definition of attack strategies, Def.~\ref{def:attack_str}.
Herein, we have additional constraints that an attack strategy needs to satisfy.
We say the attack strategy is \emph{complete} if the attack strategy is defined for every new event observation the plant generates.
Moreover, we assume that the attack strategy is \emph{consistent} if it does not insert an event disabled by the supervisor.
Lastly, we consider \emph{successful} attack strategies as the ones that can reach the critical state, i.e., strategies that can cause damage.
\begin{definition}[Complete, Consistent, and Successful Strategies \citep{meira-goes:2021synthesis}]\label{def:complete_attacker}
An attack strategy $A$ is \emph{complete} w.r.t. $G$ and $S$ if for any $s$ in $\lang(S_A/G)$, we have that $A(s)$ is defined.
$A$ is \emph{consistent} if for any $e\in \Sigma$, $s\in \lang(S_A/G)$ such that $se \in\lang(S_A/G)$ with $A(s,e) = t$, then $S(\Pi^S(A(s)t^i))!$ and $t[i+1]\in S(\Pi^S(A(s)t^i))$ for all $i\in [|t|-1]$.
Lastly, $A$ is \emph{successful} if $\exists s\in \lang(S_A/G)$ such that $\delta_G(x_{0,G},s) = x_{crit}$.
We denote by $\Psi_A$ as the set of all complete, consistent, and successful attack strategies.  
\end{definition}

\section{Probabilistic Intrusion Detection of Sensor Deception Attacks}\label{sect:problem}
.pdfIn this section, we formulate two new problems regarding probabilistic detection of sensor attacks: the verification of $\lambda$-sensor-attack detectability and the optimal $\lambda^*$ for $\lambda$-sensor-attack detectability.
We start by formally describing the detection level of a given string and the detection language of an attack strategy. 
Following these descriptions, we define the notion of $\lambda$-sensor-attack detectability.
Next, we formulate two problems over this definition.
Lastly, we compare the definition of $\lambda$-sensor-attack detectability with the definition of $\varepsilon$-safety as in \citep{meira-goes:2020towards,Fahim2024-wodes}.

\subsection{Detection Value}

The attack detection problem is to determine if an observed behavior $s\in \lang(G)$ is generated by the nominal system $S/G$ or by an attacked system $S_A/G$.
Usually, a detection problem is described as a hypothesis-testing problem \citep{poor2013introduction}.
In our case, the null hypothesis $H_0$ is defined by the nominal system $S/G$ whereas the alternative hypothesis $H_1$ is described by an attack system $S_A/G$.

To identify from which system an observation is generated, we compare the two systems using their probabilistic language.  
Inspired by the \emph{maximum a posterior probability}, we calculate the likelihood of string $s\in \lang(G)$ by directly comparing the probability between $L_p(S/G)$ and $L_p(S_A/G)$. 
We define the \emph{detection level} of a string.

\begin{definition}[Detection level]
Let $s\in \lang(S_A/G)$, the detection level of $s$ with respect to $G$, $S$, and $A$ is:
\begin{equation}\label{eq:detection_level}
det(s) = 
\frac{L_p(S_A/G)\bigl(s \bigr)}{L_p(S_A/G)\bigl(s \bigr)+L_p(S/G)\bigl(\Pi^S(A(s))\bigr)}
\end{equation}
\end{definition}

Intuitively, $det(s)$ informs a ``detection value" of the attack strategy generating string $s\in S_A/G$.
It characterizes the likelihood of $s$ being generated by $S_A/G$ compared to $\Pi^S(A(s))$ being generated in $S/G$.
To generate $s \in \lang(S_A/G)$, the attack strategy feeds the supervisor with observation $\Pi^S(A(s))$.
For this reason, we compare the probabilities of generating $s$ in $S_A/G$ versus $\Pi^S(A(s))$ in $S/G$.

Note that if $\Pi^S(A(s))\notin \lang(S/G)$, then the value of $det(s) = 1$, i.e., the attack strategy reveals the attacker when generating $s\in\lang(S_A/G)$.
These are the \emph{only} types of attacks that logical detection systems can detect, e.g., \citep{Carvalho:2018, Lima:2019, lin2024diagnosability}.
On the other hand, the detection value is still useful when $\Pi^S(A(s))\in \lang(S/G)$.
For instance, if the attack strategy $A$ modifies the closed-loop system such that $L_p(S_A/G)(s)>L_p(S/G)(\Pi^S(A(s))$, then $det(s)>0.5$, i.e., it is more likely that the observation is coming from an attacked system instead the nominal.

\begin{example}
Let us characterize the detection value for a string in our running example with attack strategy $A_1$.
We select string $s = ca\in \lang(S_{A_1}/G)$ that has probability $L_p(S_{A_1}/G)(ca) = 0.8\times 0.1 = 0.08$ as shown in Fig.~\ref{fig:SA1/G}.
Attack strategy $A_1$ modifies $s$ to $A_1(s) = cains(b)$ as described by Fig.~\ref{fig:A1}.
The supervisor observes string $\Pi^S(cains(b)) = cab$ that has probability $L_p(S/G)(cab) = 0.8\times 0.1 \times 0.111\dots = 0.0088\cdots$.
The detection value computes the likelihood of generating string $ca$ in $S_{A_1}/G$ versus $cab$ in $S/G$.
In this case, the detection value is $det(ca) = 0.9$.
After observing $cab$, it is $90\%$ more likely that $ca$ was executed in $S_{A_1}/G$ compared to $cab$ being executed in $S/G$.
\end{example}



\subsection{Detection Language}
Although the detection value is defined for every string in $s\in \lang(S_A/G)$, only some can drive the controlled system to a critical state. 
Recall that an attack strategy is only successful if it generates a string that reaches the critical state $x_{crit}$.
Therefore, these ``successful" strings must be detected by the attack detector.
On top of detecting these strings, the attack detection must identify the attack \emph{before} the system reaches a critical state. 
In other words, we can only mitigate attacks if the attack detection detects them before it is too late, i.e., before the critical state is reached.

Based on this discussion, we define the detection language $L^A_{det}\subseteq \lang(S_A/G)$ as the strings in which the detector must make a decision otherwise it is too late. 
Recall that we assume that $x_{crit}$ is only reached via controllable events, i.e., $x \in X_G$ such that $\delta_G(x,e) = x_{crit}$ implies $e\in \Sigma_c$.
First, we define the \emph{critical language of $A$} by the strings in $\lang(S_A/G)$ that reach the critical state.
\begin{equation}
L_{crit}^A = \{s\in \lang(S_A/G)\mid \delta_G(x_{0,G},s) = x_{crit}\}
\end{equation}
As mentioned above, it is too late to detect the attack with strings in $L_{crit}^A$, i.e., the critical state has been reached.
Given plant $G$, supervisor $S$, and attack strategy $A$, we define the detection language as 
\begin{equation}\label{eq:det_lang}
L_{det}^A = \{s\in \lang(S_A/G)\mid (\exists e\in \Sigma_c.\ se\in L_{crit}^A)\wedge (\forall \sigma\in \Sigma_c,\ i<|s|.\ s^i\sigma \notin L_{crit}^A)\}
\end{equation}

A string $s$ in $L^A_{det}$ can reach the critical state with a controllable event.
Moreover, no prefix of $s$ can reach the critical state with any controllable event.
In other words, string $s$ is the shortest string to be one controllable event away from a critical state. 

\begin{example}
Let us return to our running example with attack strategy $A_1$ to investigate its detection language.
The detection language is given by $L_{det}^{A_1} = \{a, ca, cca, \dots\}$.
Let $s = a$, then $A(s) = a ins(b)$, i.e., the attack inserts event $b$ immediately after $a$ occurs. 
This insertion makes the supervisor return to state $2$ while the plant remains in state $1$ as in Figs.~\ref{fig:plant_G} and \ref{fig:sup_R}.
Since the supervisor enables event $a$ in state $2$, if the plant executes $a$, then the critical state $0$ is reached.
Therefore, the detection system must decide after observing $\Pi^S(a ins(b)) = ab$ if it should disable event $a$.
Note that since observation $ab$ belongs to $\lang(S/G)$, it cannot be detected by logical detectors.
\end{example}

\subsection{Probabilistic Sensor Detectability}
Based on the definitions of detection value, $det(s)$, and detection language, $L_{det}^A$, we define $\lambda$-sa detectability as follows:

\begin{definition}[$\lambda$-sa detectable]\label{def:lambda-sa-det}
Given plant $G$, supervisor $S$, a set of compromised events $\Sigma_a\subseteq \Sigma$, and value $\lambda \in (0.5,1]$, the controlled system is \emph{$\lambda$-sensor-attack detectable}, or simply $\lambda$-sa, if
\begin{equation}
dtc:=\inf_{A\in \Psi_A}\ \inf_{s\in L^A_{det}} det(s)\geq \lambda\label{eq:likelihood}
\end{equation}
\end{definition}

Intuitively, the controlled system is $\lambda$-sa if \emph{every} complete, consistent, and successful attack strategy significantly modifies the probability of the nominal controlled system.
The parameter $\lambda$ gives the confidence level of detection strings being more likely to be generated by the attacked systems.
For example, $0.9$-sa detectable means that detection strings for all attack strategies are at least $90\%$ more likely to have been generated in $S_A/G$ compared to their observation in $S/G$.



Based on the definition of $\lambda$-sa detectability, we formulate two problems.
First, we define a verification problem to check if a controlled system is $\lambda$-sa detectable.

\begin{problem}[Verification of $\lambda$-sa]\label{prob:ver-lsa}
Given plant $G$, supervisor $R$, a set of compromised events $\Sigma_a\subseteq \Sigma$, and $\lambda \in (0.5,1]$, verify if the controlled system is $\lambda$-sa detectable
\[dtc \geq\lambda\]
\end{problem} 

Problem~\ref{prob:ver-lsa} verifies if the controlled system is $\lambda$-sa detectable for a given $\lambda$ value.
A natural question to ask is if there exists a maximum $\lambda$ value such that the controlled system is $\lambda$-sa detectable.
Formally, the problem is posed as follows.
\begin{problem}[Maximum $\lambda$-sa]\label{prob:optimal-lsa}
Given plant $G$, supervisor $R$, and set of compromised events $\Sigma_a\subseteq \Sigma$, find, if it exists, the maximum $\lambda^*$ such that the controlled system is $\lambda^*$-sa: 
$$\lambda^* := \sup\ \{\lambda\in (0.5,1] \mid S/G \text{ is } \lambda\text{-sa detectable}\}$$
\end{problem}

\begin{remark}
In  \citep{meira-goes:2020towards,Fahim2024-wodes}, the problem of $\varepsilon$-safety is defined.
The main difference between $\varepsilon$-safety and $\lambda$-sa detectability is the $\inf_{A\in \Psi_A}$ in Eq.~\ref{eq:detection_level}.
The $\varepsilon$-safety definition only considers \emph{one} attack strategy whereas $\lambda$-sa considers all possible complete, consistent, and successful attack strategies.
The $\lambda$-sa detectability reduces to $\varepsilon$-safety when $|\Psi_A| = 1$, i.e., a single attack strategy.
\end{remark}

\section{Solution Verification of $\lambda$-sa detectability}\label{sect:solution}
.pdf
Figure~\ref{fig:overview_solution} provides an overview of our solution approach. 
First, we construct an attack system that encompasses all possible sensor attacks using the plant model $G$, supervisor $R$, and compromised event set $\Sigma_a$.
Intuitively, we construct PDES $M_n$ containing the nominal controlled behavior and $M_a$ containing all possible attacked behavior.
Next, we constructed a weighted verifier consisting of a DFA $V$ and weight function $w$.
The DFA $V$ marks the language in $L^A_{det}$ for all attacks in $\Psi_A$.
At the same time, $V$ combines the information of string executions in nominal and attack systems.
The function $w$ captures the probability ratio between executing transitions in the nominal controlled system and an attacked system.

\begin{figure}[h]
    \centering
    \includegraphics[width=1\textwidth]{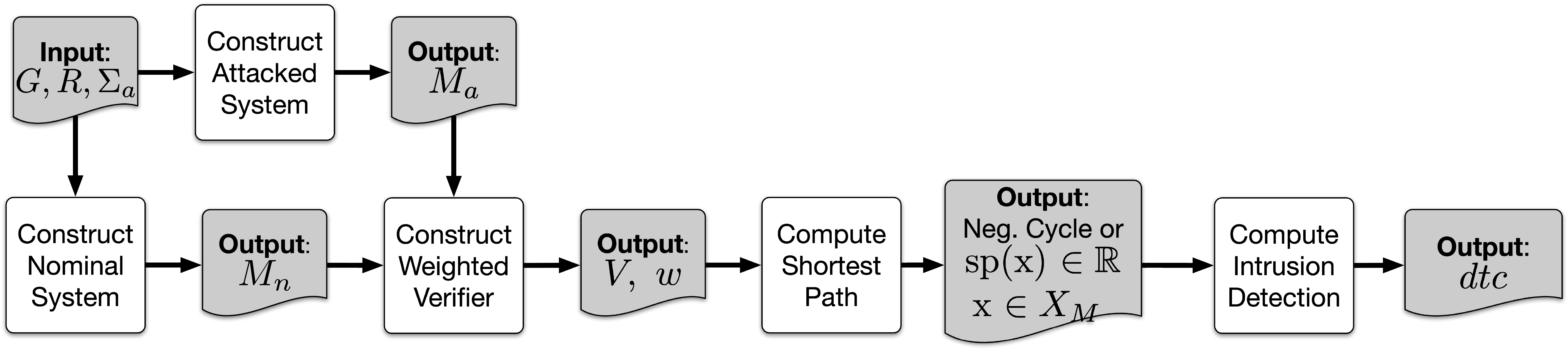} 
    \caption{Overview on solution algorithm} 
    \label{fig:overview_solution}
\end{figure}

Based on the weighted verifier, we pose a shortest path problem to identify the shortest path to reach a marked state in $V$, i.e., executing a string in a detection language $L^A_{det}$.
The shortest path problem outputs either: (1) a message that $V$ has ``negative cycles" or (2) a vector $\mathbf{sp}$ with the shortest path values from the initial state to each marked state in $V$.
Based on this output, we calculate the detection value $dtc$.
To make our approach concrete, we describe each step in detail using our running example.
Our goal is to verify if our running example is $0.9$-safe.


\subsection{Construction of nominal system}
The nominal controlled system is built as described in Section~\ref{sect:preliminaries} using the probabilistic parallel composition $||_p$.
Formally, the nominal system is defined by $M_n = R||_pG$.
Figure~\ref{fig:M_n} depicts the nominal system for our running example.

\subsection{Construction of attacked system}\label{sub:controlled-system}
The sensor attacker disrupts the \emph{nominal} controlled system $R/G$.
Based on the attack actions, we construct the attacked plant $G_a$ and the attacked supervisor $R_a$ to include \emph{every possible attack action} with respect to $\Sigma_a$ as in \citep{meira-goes:2021synthesis}. 
In this manner, we can obtain a structure that contains all possible controlled systems under sensor attacks by composing $R_a$ and $G_a$.

The attacked plant $G_a$ is a copy of $G$ with more transitions based on compromised sensors $\Sigma_a$.
Insertion events are introduced to $G_a$ as self-loops with probability $1$ since fictitious insertions do not alter the state of the plant with probability $1$.
On the other hand, deletion events are defined in $G_a$ with the same probability as their legitimate events because the attacker can only delete an event if this event has been executed in the plant $G$.
The following insertion and deletion transitions are added to $G_a$ for any $e\in \Sigma_a$.
\begin{align}
\delta_{G_a}(x,ins(e)) = x, \quad & P_{G_a}(x,ins(e),x) = 1\label{eq:ins_plant}\\
\delta_{G_a}(x,del(e)) = y, \quad & P_{G_a}(x,del(e),y) = P_{G}(x,e,y), \text{ if } \delta_G(x,e)!\label{eq:del_plant}
\end{align}
Figure~\ref{fig:Ga} shows the $G_a$ for our running example, the new transitions are highlighted in red.
Insertions in deadlock states are omitted for illustration purposes.

\begin{figure}[thpb]
\begin{subfigure}[t]{0.45\columnwidth}
\centering
\includegraphics[width=1\columnwidth]{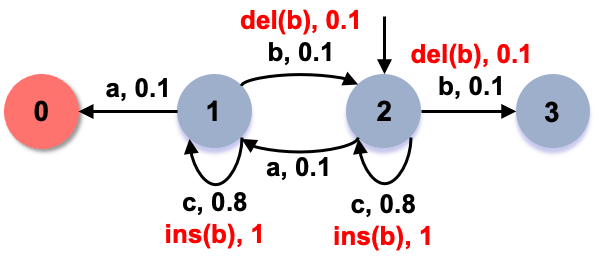}
\caption{$G_a$}
\label{fig:Ga}
\end{subfigure}
\ 
\begin{subfigure}[t]{0.45\columnwidth}
\centering
\includegraphics[width=0.71\columnwidth]{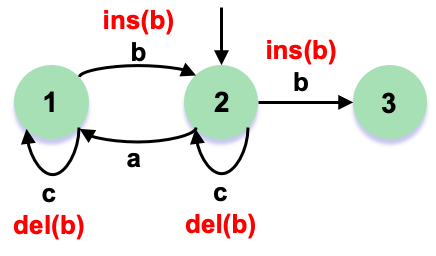}
\caption{$R_a$}
\label{fig:Ra}
\end{subfigure}
\\
\begin{subfigure}[t]{1\columnwidth}
\centering
\includegraphics[width=0.55\columnwidth]{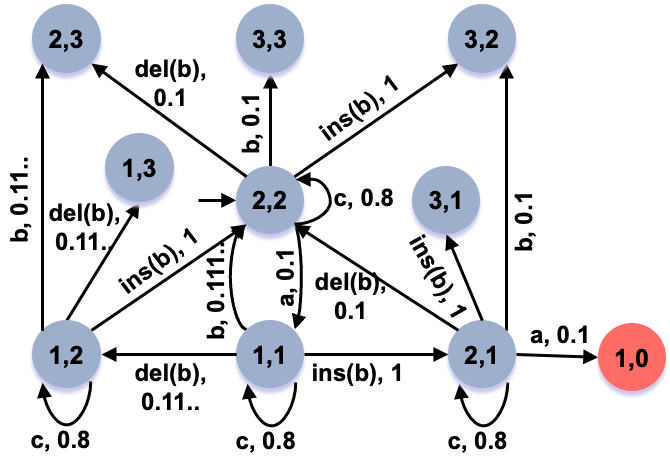}
\caption{$M_a$}
\label{fig:Ma}
\end{subfigure}
\caption{Attacked plant, supervisor, and system}
\label{fig:attacked-models}
\vspace{-2em}
\end{figure}


Similar to the construction of $G_a$, we define the attacked supervisor $R_a$ as a DFA.
For the supervisor, insertions are observed as legitimate events while deletions do not change the state of the supervisor.
For any event $e\in \Sigma_a$, the following transitions are added to $R_a$ on top of those already in $R$.
\begin{align}
\delta_{R_a}(x,ins(e)) &= \delta_R(x,e) \text{ if } \delta_R(x,e)! \label{eq:ins_sup}\\
\delta_{R_a}(x,del(e)) &= x \text{ if } \delta_R(x,e)! \label{eq:del_sup}
\end{align}
Figure~\ref{fig:Ra} shows the attacked supervisor for our running example.

All possible controlled systems under attack are represented by $M_a = R_a||_p G_a$.
Figure~\ref{fig:Ma} shows the system $M_a$ for our running example.
Recall that states in $M_a$ are of the format $(x_{R},x_{G})$ for $x_R\in X_{R_a}$ and $x_G\in X_{G_a}$.
Note that the attacker can cause a mismatch between these states which was not possible in the nominal $M_n$ as in Fig.~\ref{fig:M_n}.
Although this model is simple, it is well-suited to the problem we are investigating since we want to detect every possible attack strategy. 
Next, we have two propositions that link the language of $M_a$ with languages generated by attacked systems $\lang(S_A/G)$ for a given $A$.
The first proposition shows that a complete and consistent attack strategy generates a behavior in $M_a$.
\begin{proposition}\label{prop:Sa-Ma}
Let attack strategy $A$ be complete and consistent.
For every $s\in \lang(S_A/G)$, then $A(s) \in \lang(M_a)$.
\end{proposition}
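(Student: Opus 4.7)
The plan is to induct on $|s|$, strengthening the invariant so that we track not only membership $A(s)\in\lang(M_a)$ but also the exact state reached in $M_a$ after executing $A(s)$. Specifically, I would prove: for every $s\in\lang(S_A/G)$, the string $A(s)$ is executable in $M_a$ from the initial state and arrives at the state $\bigl(\delta_{R_a}(x_{0,R},\Pi^S(A(s))),\,\delta_{G_a}(x_{0,G},s)\bigr)$. Carrying both components along is essential because each event in $A(s)$ affects one of the two factors of $M_a = R_a\|_p G_a$ (insertions move the supervisor only, deletions move the plant only, and plain events move both), and the inductive argument needs to know the precise plant and supervisor states in order to invoke feasibility.

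For the base case, $A(\epsilon)=A(\epsilon,\epsilon)$ lies in $\Sigma_i^*$. Each inserted event $\mathrm{ins}(\sigma)$ is a self-loop in $G_a$ (Eq.~\ref{eq:ins_plant}), so plant feasibility is automatic; in $R_a$, $\mathrm{ins}(\sigma)$ exists exactly when $\sigma$ is enabled by $R$ at the current supervisor state (Eq.~\ref{eq:ins_sup}), and this is guaranteed by consistency. For the inductive step, assume the invariant for $s$ and let $se\in\lang(S_A/G)$. Write $A(se)=A(s)\cdot t$ with $t=A(A(s),e)\in\{e,\mathrm{del}(e)\}\Sigma_i^*$ or $\{e\}\Sigma_i^*$, depending on whether $e\in\Sigma_a$. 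The leading symbol of $t$ is feasible in $M_a$ from the current state: $e$ is enabled in $G_a$ because $se\in\lang(G)$, and enabled in $R_a$ because $se\in\lang(S_A/G)$ means $e\in S(\Pi^S(A(s)))$; alternatively $\mathrm{del}(e)$ is feasible in $G_a$ by Eq.~\ref{eq:del_plant} (since $e$ is enabled in $G$) and in $R_a$ by Eq.~\ref{eq:del_sup} (a self-loop). A short computation using the defining equations for $G_a$ and $R_a$ then shows that after consuming this leading symbol the state of $M_a$ is updated to $\bigl(\delta_{R_a}(x_{0,R},\Pi^S(A(s)e')),\,\delta_G(x_{0,G},se)\bigr)$, where $e'$ equals $e$ or $\epsilon$ according to the case. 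The remaining suffix of $t$ in $\Sigma_i^*$ is then processed exactly as in the base case, using consistency for supervisor feasibility and the self-loop construction for plant feasibility.

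The main obstacle, in my view, is not the logic of the induction but bookkeeping across the projection operators $\Pi^G$ and $\Pi^S$: I would need a preliminary lemma that $\Pi^G(A(s))=s$ for every $s\in\lang(S_A/G)$ (which follows from Def.~\ref{def:attack_str} by induction) so that the plant component of the invariant reads cleanly as $\delta_G(x_{0,G},s)$ rather than $\delta_{G_a}(x_{0,G},\Pi^G(A(s)))$. A related subtlety is that consistency in Def.~\ref{def:complete_attacker} is stated only for transitions triggered by $e\in\Sigma$; handling the initial block $A(\epsilon,\epsilon)$ requires reading the property as also governing the $e=\epsilon$ case (which is the natural reading, since those insertions are likewise fed to the supervisor and must be enabled by it). Once this convention is fixed, the induction goes through mechanically, and completeness guarantees that $A(s)$ is defined for the strings to which we apply the argument.
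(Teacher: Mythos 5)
Your proposal is correct and follows the same route the paper takes: the paper's proof is a one-line appeal to ``the construction of $G_a$, $R_a$, and $A$ being complete and consistent,'' and your induction (with the strengthened invariant tracking the reached state of $M_a$, the lemma $\Pi^G(A(s))=s$, and the case analysis on $e$, $del(e)$, and the $\Sigma_i^*$ suffix) is precisely the detailed version of that argument. Your observation that consistency in Def.~\ref{def:complete_attacker} must be read as also covering the initial block $A(\epsilon,\epsilon)$ is a legitimate edge case the paper glosses over, and your handling of it is the intended one.
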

\begin{proof}
It follows by the construction of $G_a$, $R_a$, and by $A$ being a complete and consistent attack strategy.
\end{proof}

Next, every string in $M_a$ can be generated by a complete and consistent attack strategy.
\begin{proposition}\label{prop:Ma-Sa}
For every $s\in \lang(M_a)$, there exists $A$ complete and consistent such that $\Pi^G(s)\in \lang(S_A/G)$.
\end{proposition}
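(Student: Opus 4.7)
The plan is to construct the attack strategy $A$ directly from the string $s$. By the construction of $G_a$ and $R_a$ (Eqs.~\ref{eq:ins_plant}--\ref{eq:del_sup}), every transition in $M_a$ labeled by an insertion event is a self-loop on the plant component, while every transition labeled by an event in $\Sigma \cup \Sigma_d$ advances the plant. Consequently, $s$ admits a unique decomposition
\[
s = \alpha_0\,\beta_1\,\alpha_1\,\beta_2\,\cdots\,\beta_k\,\alpha_k,
\]
where each $\alpha_i \in \Sigma_i^*$ and each $\beta_i \in \Sigma \cup \Sigma_d$. Set $t := \Pi^G(s)$, so $t = e_1 e_2 \cdots e_k$ with $e_i = \beta_i$ when $\beta_i \in \Sigma$ and $e_i$ being the underlying event when $\beta_i = \text{del}(e_i)$.

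Then I would define $A$ along the ``main path'' $t$ by
$A(\epsilon, \epsilon) := \alpha_0$ and $A(t^{i-1}, e_i) := \beta_i \alpha_i$ for $i = 1, \ldots, k$,
and extend it trivially elsewhere: $A(u, \epsilon) := \epsilon$ for $u \neq \epsilon$, and $A(u, e) := e$ for any string $u$ not of the form $t^{i-1}$ (or for $i > k$) and any $e \in \Sigma$. This extension clearly satisfies conditions (1)--(3) of Def.~\ref{def:attack_str} and makes $A$ complete, because every observation history is assigned an attack action.

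For consistency, the insertions all live inside the blocks $\alpha_i$. Each $\text{ins}(e) \in \alpha_i$ appears as an enabled transition in $R_a$ at the current supervisor state $x_R$, so by Eq.~\ref{eq:ins_sup} we must have $\delta_R(x_R, e)!$, i.e., $e$ is enabled by $S$ at the observation $\Pi^S(A(t^{i-1})\beta_i \alpha_i^{j})$ preceding the insertion. To conclude $\Pi^G(s) = t \in \lang(S_A/G)$, I would induct on $i$, showing simultaneously that after processing the first $i$ blocks of $s$ in $M_a$, the plant component sits at $\delta_G(x_{0,G}, t^i)$ and the $R_a$-component sits at $\delta_R(x_{0,R}, \Pi^S(A(t^i)))$; feasibility of $\beta_{i+1}$ in $M_a$ then gives both that $e_{i+1}$ is executable in $G$ from this plant state and that $e_{i+1}$ is enabled by $S$ from this supervisor state, hence by $S_A$ at $t^i$.

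The main obstacle is the state-bookkeeping in the inductive step: one must verify that the $R_a$-state reached by traversing a block $\alpha_{i-1}$ containing insertions and possibly a deletion coincides with the $R$-state reached under the supervisor's true observation $\Pi^S(A(t^{i-1}))$, and symmetrically for $G_a$ versus $G$ via $\Pi^G$. This reduces to a straightforward case analysis on event types using Eqs.~\ref{eq:ins_plant}--\ref{eq:del_sup} together with the definitions of $\Pi^G$ and $\Pi^S$, but it is the substantive content of the argument.
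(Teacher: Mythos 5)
Your construction follows the same route as the paper's proof: the same block decomposition of $s$ into insertion runs separated by events of $\Sigma\cup\Sigma_d$, the same assignment of one block per attack action with a trivial extension elsewhere, and the same induction matching the $G_a$- and $R_a$-components against $\delta_G(x_{0,G},t^i)$ and $\delta_R(x_{0,R},\Pi^S(A(t^i)))$. There is, however, one concrete flaw in how you index the strategy. By Def.~\ref{def:attack_str} and its extension to strings, the first argument fed to the two-argument map at step $i$ is the \emph{modification history} $A(t^{i-1})=\alpha_0\beta_1\alpha_1\cdots\beta_{i-1}\alpha_{i-1}\in\Sigma_m^*$, not the plant prefix $t^{i-1}=e_1\cdots e_{i-1}$: the recursion is $A(t^i)=A(t^{i-1})\,A\bigl(A(t^{i-1}),e_i\bigr)$. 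Since you place the nontrivial actions only at the pairs $(t^{i-1},e_i)$ and set $A(u,e):=e$ for every other $u$, the extension consults your defined actions only as long as $A(t^{i-1})$ happens to equal $t^{i-1}$, i.e., only while all earlier blocks contain no insertions and no deletions. As soon as some $\alpha_j\neq\epsilon$ or some $\beta_j\in\Sigma_d$ occurs, the subsequent lookups fall into the trivial branch, so $A(t)$ degenerates to $\alpha_0 e_1e_2\cdots$ rather than $s$, and the attacked behavior generated is not the one you intend -- precisely in the cases where the attacker actually acts. The fix is exactly what the paper does: define the actions at the pairs $\bigl(\alpha_0\beta_1\alpha_1\cdots\beta_{i-1}\alpha_{i-1},\,e_i\bigr)$, i.e., index by prefixes of $s$ itself. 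With that repair, the rest of your argument (consistency of insertions via Eq.~\ref{eq:ins_sup}, and the two-component state invariant proved by induction) goes through and coincides with the paper's proof.
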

\begin{proof}
We start by showing $\Rightarrow$ by constructing an attack strategy $A$ that generates $s\in \lang(M_a)$.
First, we can break $s$ into $0\leq k\leq|s|$ substrings such that $s = t_1\dots t_k$.
Moreover, each substring satisfies: $t_1\in \Sigma_i^*$ and $t_i\in (\Sigma\cup\Sigma_d)\Sigma_i^*$ for $1<i\leq k$.

Intuitively, we are breaking $s$ into $k$ substrings to be generated by the attack strategy $A$.
Recall that $A$ needs to satisfy the conditions in Def.~\ref{def:attack_str}.
Attack $A$ will output each of these $t_i$, e.g., $A(\epsilon,\epsilon) = t_1$, and $A(t_1,\mask(t_2[1])) = t_2$.
We construct $A$ for $s$ as follows:
\begin{align*}
A(\epsilon,\epsilon) &= t_1\\
A(t_1\dots t_j,\mask(t_{j+1}[1])) &= t_{j+1} \forall\ 1<j\leq k-1
\end{align*}
For other strings $t\in\Sigma_m^*\setminus\{\epsilon\}$ and event $e\in\Sigma$, the attack strategy is $A(t,e)=e$.
Attack strategy $A$ is complete and consistent by construction.

Now, we show that $\Pi^G(s)\in \lang(S_A/G)$ by showing that $\Pi^G(t_1\dots t_j)\in \lang(S_A/G)$.
We show this by recursively showing that $te\in \lang(G)$, $t\in \lang(S_A/G)$ and $e\in S_A(t)$ which implies that $te\in \lang(S_A/G)$.  
By definition of $\lang(S_A/G)$, $\epsilon = \Pi^G(t_1)\in \lang(S_A/G)$.

By construction of $t_1$ and $t_2$, we have $\Pi^G(t_1t_2) = \mask(t_2[1])$.
Now by construction of $R_a$, it follows that $x_R = \delta_{R_a}(x_{0,R_a},t_1) = \delta_R(x_{0,R},\Pi^S(t_1))$. 
Since $t_1t_2 \in \lang(M_a)$ and the definition of $R_a||_pG_a$, we have that $t_2[1]\in \Gamma_{R_a}(x_R)$ and $\mask(t_2[1])\in \Gamma_{R}(x_R)$.
Thus, the event $\mask(t_2[1])$ is allowed by $S(A(\epsilon,t_1)) = S_A(\epsilon)$.
As $t_1t_2[1]\in \lang(G_a)$, then $\Pi^G(t_1t_2) = \mask(t_2[1])\in \lang(G)$ by construction of $G_a$.
In summary, we have $\epsilon\in \lang(S_A/G)$, $\Pi^G(t_1t_2) = t_2[1]\in \lang(G)$, and $t_2[1]\in S_A(t_1)$, which implies that $\Pi^G(t_1t_2)\in \lang(S_A/G)$.
By similar recursive arguments, we can show that $\Pi^G(t_1\dots t_j) = t_2[1]\dots t_j[1]\in \lang(S_A/G)$ for any $1<j\leq k$.
\end{proof}

\subsection{Constructing the Weighted Verifier}
Once $M_a = R_a||_p G_a$ is constructed, we identify all possible languages $L_{det}^A$ for any $A\in \Pi_A$.
Recall that $L_{det}^A$ is defined by the shortest strings that are one controllable event away from a critical state.
Based on $M_a$, we define the detection states as follows:
\begin{equation}
X_{det} = \{(x_R,x_G)\in X_{M_a}\mid \exists e\in (\Sigma_d\cap\Sigma)\text{ s.t. } (e \in \Gamma_{R_a}(x_R))\wedge (\delta_{G_a}(x_{G},e)=x_{crit}\}
\end{equation}
In the $M_a$ in Fig.~\ref{fig:Ma}, the detection state is $(2,1)$ since the critical state $(1,0)$ is reached via controllable event $a$. 
The detection states $X_{det}$ are related to $L_{det}^A$ since they are states one controllable event away from a critical state.


Next, we need a structure where we can directly compare string executions in $S_A/G$ versus $S/G$.
Inspired by the verifier automaton in \citep{yoo2002polynomial}, we define the weighted verifier, DFA $V$, and weight function $w$.
The verifier automaton $V$ marks the strings in $L^A_{det}$ for any $A\in \Psi_A$.
Moreover, weights $w$ contain the probability information of executing transitions in attacked systems $S_A/G$ and nominal system $S/G$.
We start by constructing the verifier $V$ similarly to the steps described in \citep{meira-goes:2020towards}. 

\begin{definition}\label{def:verifier}
Given $M_n$, $M_a$, and the detection states $X_{det}$, we define verifier $V$ as: 
(1) $X_{V}\subseteq X_{R}\times X_G \times X_{R_a}\times X_{G_a}$; (2) $x_{0,V} = (x_{0,R},x_{0,G},x_{0,R_a},x_{0,G_a})$; (3) $\delta_{V}((x_1,x_2,x_3,x_4),e) = (y_1,y_2,y_3,y_4)$ if $\delta_{R,G}((x_1,x_2),\Pi^S(e)) = (y_1,y_2)$ and $\delta_{R_a,G_a}((x_3,x_4),e) = (y_3,y_4)$  for $e\in \Sigma_m$, $x_1,y_1\in X_R$, $x_2,y_2\in X_G$, $x_3,y_3\in X_{R_a}$, and $x_4,y_4\in X_{G_a}$ with $(x_3,x_4)\not\in X_{det}$, otherwise is undefined; and (4) $X_{m,V} = \{(x_1,x_2,x_3,x_4) \mid \ (x_3,x_4)\in X_{det}\}$.
\end{definition}

With abuse of notation, we only describe $V$ by its accessible and co-accessible parts, i.e., the $Trim(V)$ operator as in \citep{Lafortune:2021} is applied after Def.~\ref{def:verifier}.
Figure~\ref{fig:verifier} depicts the verifier $V$ constructed based on Def.~\ref{fig:verifier}.
To construct this automaton, we start from the initial state $x_{0,V}$ and perform a reachability analysis to obtain the next states via $\delta_V$.
For example from state $(2,2,2,2)$ and event $a$, state $(1,1,1,1)$ is reached.
In this scenario, $ego$ moves one cell closer to $adv$ in both systems.
From state $(1,1,1,1)$ and event $ins(b)$, state $(2,2,2,1)$ is reached.
In this case, the nominal system moves to state $(2,2)$ since the insertion $ins(b)$ is observed as event of $b$.
However, the attacked system moves to state $(2,1)$ where the relative distance remains $1$.
State $(2,2,2,1)$ is a marked state since state $(2,1)$ is a detection state.
Next, we discuss the weights in the verifier.

\begin{figure}[thpb]
\centering
\includegraphics[width=0.65\columnwidth]{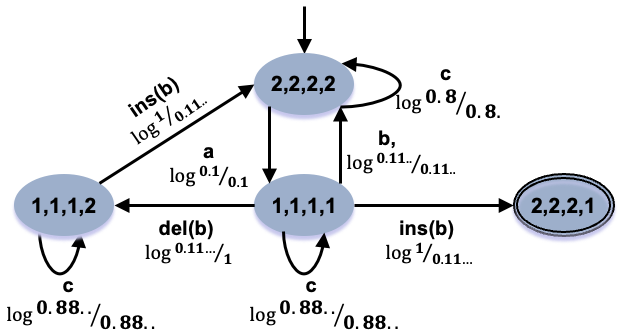}
\caption{Weighted verifier $V$}
\label{fig:verifier}
\end{figure}

We define weights for verifier $V$ based on the transition probabilities in $M_n$ and $M_a$.
For example, the transition in $V$ from state $(1,1,1,1)$ to state $(2,2,2,1)$ via event $ins(b)$ captures the information of the execution in $M_n$ and in $M_a$.
In the case of $M_n$, the nominal system observes event $b$, which has a probability of $0.111\dots$, Fig.~\ref{fig:M_n}.
In the attacked system, the probability of executing $ins(b)$ is equal to $1$, Fig.~\ref{fig:Ma}.
Thus, the ratio of executing this transition in the attacked system is $9$ times more likely than executing in the nominal system, i.e., $1/0.111\dots = 9$.
We use the logarithm of this ratio as a weight for this transition as shown in Fig.~\ref{fig:verifier}.
The use of the logarithm will become clear when we pose the shortest path problem.

Formally, we define the weight function $w:X_V\times \Sigma_m\times X_V\rightarrow \mathbb{R}$ for each transition $\delta_V((x_1,x_2,x_3,x_4),e) = (y_1,y_2,y_3,y_4)$ as:

\begin{equation}\label{eq:ratio}
w((x_1,x_2,x_3,x_4),e,(y_1,y_2,y_3,y_4)) = \log \frac{P_{M_a}((x_3,x_4),e,(y_3,y_4)}{P_{M_n}((x_1,x_2),\Pi^S(e),(y_1,y_2))}
\end{equation}
In Eq.~\ref{eq:ratio}, we have that $P_{M_n}((x_1,x_2),\epsilon,(y_1,y_2)) = 1$, i.e., the probability of executing the empty string is always one.

Based on the verifier $V$, we calculate the weights as shown in Fig.~\ref{fig:verifier} using Eq.~\ref{eq:ratio}.
For example, the weight for transition $\delta_V((2,2,2,2),b) = (1,1,1,1)$ is equal to $\log{(1)}=0$ since $P_{M_a}((2,2),b,(1,1)) = P_{M_n}((2,2),b,(1,1)) = 0.1$.

Propositions~\ref{prop:Sa-Ma} and~\ref{prop:Ma-Sa} have linked the strings in $M_a$ to strings in $S_A/G$ for an attack strategy.
Next, we show that the strings in $\lang_m(V)$ are related to strings in a detection language $L_A^{det}$.

\begin{proposition}\label{prop:det_lang}
A string $s\in \lang_m(V)$ if and only if $\exists A\in \Psi_A$ such that $\Pi^G(s)\in L_{det}^A$.
\end{proposition}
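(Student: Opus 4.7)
The plan is to establish both directions of the biconditional by invoking Propositions~\ref{prop:Sa-Ma} and~\ref{prop:Ma-Sa} and then matching the structural property ``terminal state in $X_{det}$'' of the verifier $V$ against the two clauses in the definition of $L_{det}^A$ (Eq.~\ref{eq:det_lang}).

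For the forward direction $(\Rightarrow)$, I would start with $s\in \lang_m(V)$ and use Def.~\ref{def:verifier} to extract: (i) $s\in \lang(M_a)$; (ii) $\Pi^S(s)\in \lang(M_n)$; and (iii) a terminal state $(x_1,x_2,x_3,x_4)$ whose $M_a$-component $(x_3,x_4)$ lies in $X_{det}$. Applying Proposition~\ref{prop:Ma-Sa} to (i) yields a complete and consistent attack strategy $A$ that reproduces $s$, i.e., $A(\Pi^G(s))=s$ and $\Pi^G(s)\in \lang(S_A/G)$. I would then extend $A$ so that, after observing $\Pi^G(s)$, no attack action is taken on the controllable event $e$ witnessing $(x_3,x_4)\in X_{det}$; this forces $\Pi^G(s)e\in L_{crit}^A$ and certifies $A\in \Psi_A$. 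To conclude $\Pi^G(s)\in L_{det}^A$, the existence clause of Eq.~\ref{eq:det_lang} is immediate from $(x_3,x_4)\in X_{det}$, while the minimality clause follows from the rule in Def.~\ref{def:verifier} that $\delta_V$ is undefined at detection states: this forces every \emph{intermediate} $M_a$-component along the trace of $s$ to lie outside $X_{det}$, so no proper prefix of $\Pi^G(s)$ can be one controllable event from $x_{crit}$ in $S_A/G$.

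For the reverse direction $(\Leftarrow)$, I would start with $A\in \Psi_A$ and $\Pi^G(s)\in L_{det}^A$, taking $s=A(\Pi^G(s))$. Proposition~\ref{prop:Sa-Ma} gives $s\in \lang(M_a)$. Consistency of $A$, together with the standard property that a consistent attacker's observation trajectory stays in $\lang(M_n)$ up until the critical state is reached, yields $\Pi^S(s)\in \lang(M_n)$, so Def.~\ref{def:verifier} delivers $s\in \lang(V)$. The existence clause of $L_{det}^A$ translates into the terminal state $(x_3,x_4)$ lying in $X_{det}$, so $s$ reaches a marked state. Finally, the minimality clause of $L_{det}^A$ precludes any proper prefix of $s$ from carrying the $M_a$-component into $X_{det}$, so $s$ survives the trim operation in the definition of $V$ and belongs to $\lang_m(V)$.

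The main obstacle will be faithfully translating between the single structural condition in Def.~\ref{def:verifier} (terminal state in $X_{det}$, with no intermediate state in $X_{det}$) and the two-clause characterization in Eq.~\ref{eq:det_lang}. Matching the minimality clauses on the two sides requires a careful induction along the trace of $s$, showing that being outside $X_{det}$ in $V$ is exactly the negation of ``one controllable event away from $x_{crit}$ in $S_A/G$''. A secondary subtlety is justifying $\Pi^S(s)\in \lang(M_n)$ in the reverse direction, which rests on consistency of $A$ plus the fact that $\Pi^G(s)$ has not yet reached $x_{crit}$, so the supervisor's nominal decisions along $\Pi^S(s)$ remain well defined.
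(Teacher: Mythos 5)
Your proposal follows essentially the same route as the paper: the paper's proof is a one-line appeal to Propositions~\ref{prop:Sa-Ma} and~\ref{prop:Ma-Sa} together with the construction of $V$ and $X_{det}$, and your argument is precisely a fleshed-out version of that, matching the ``marked terminal state, no marked intermediate state'' structure of $V$ against the two clauses of Eq.~\ref{eq:det_lang}. Your additional details---extending the attack strategy from Prop.~\ref{prop:Ma-Sa} so that it is also \emph{successful} (hence in $\Psi_A$), and justifying $\Pi^S(s)\in\lang(M_n)$ from consistency in the reverse direction---are correct and fill in steps the paper leaves implicit.
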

\begin{proof}
This proposition follows from Propositions~\ref{prop:Sa-Ma} and~\ref{prop:Ma-Sa} and the construction of $V$ and its detection states $X_{det}$    
\end{proof}

\subsection{Finding the Shortest Path}
The verifier automaton $V$ has the information to find the attack strategy $A$ and the string in $L^A_{det}$ with the smallest intrusion detection value, $dtc$, as in Def.~\ref{def:lambda-sa-det}.
This string is related to the shortest path from the initial state to any detection state, i.e., the string executed by the path.
To show this relationship, we present the relationship between the weight of a path in $V$ and the probabilities of executing this path in $M_n$ and $M_a$

Let us select the path $p = (2,2,2,2)a(1,1,1,1)ins(b)\allowbreak(2,2,2,1)$ from the verifier $V$ in Fig.~\ref{fig:verifier}.
This path is generated using attack strategy $A_1$ in Fig.~\ref{fig:A1}.
The weight of this path is the sum of each transition weight: $$\log(\frac{0.1}{0.1})+\log(\frac{1}{0.11\dots})=\log(1)+\log(9) = \log(1\times 9) = \log(9)$$
In Eq.~\ref{eq:ratio}, we define transition weights as the ratio of transition probabilities in $M_a$ over  $M_n$.
We rewrite the weight of path $p$:
\begin{align*}
\log(\frac{0.1}{0.1})+\log(\frac{1}{0.11\dots}) &= \log(\frac{0.1}{0.011\dots})\\&=\log(\frac{L_p(S_{A_1}/G)(a)}{L_p(S/G)(ab)})\\& =\log(9)
\end{align*}
Executing $p$ in $M_a$ is $9$ times more likely than executing in $M_n$.

By finding the shortest path in $V$, we find the smallest ratio of executing a path in $M_a$ over executing in $M_n$.
The shortest path is defined as:
\begin{definition}[Shortest path]\label{def:shortest-path}
Given verifier $V$ with weight function $w$, the shortest marked path is defined as the path with the shortest sum of weights from the initial state to a marked state in $V$:
$$\inf_{\rho:=x_0e_0\dots x_{|\rho|}\in Paths_m(V)}\sum_{i=0}^{i<|\rho|}w(x_i,e_i,x_{i+1})$$
where $Path_m(G)$ is a path starting in $x_{0,G}$ and ending in $X_{m,G}$, $Path_m(G) = \{x_0e_0\dots x_n\in (X_G\times \Sigma)^*X_G \mid  x_0 = x_{0,G}\wedge x_{i+1} = \delta_G(x_i,e_i), i<n \wedge x_n\in X_{m,G}\}$.
\end{definition}

The following proposition formally ties the weight of a path in $V$ with the probability ration of executing a string in $M_a$ and $M_n$.
\begin{proposition}\label{prop:weight-Ma}
Given verifier $V$ and weight function $w$, for any $s\in \lang_m(V)$ with path $x_0s[1]x_1\dots s[|s|]x_{|s|}$ then
$$\sum_{i=0}^{|s|-1} w(x_i,s[i+1],x_{i+1}) = \log \left(\frac{L_p(S_A/G)(\Pi^G(s))}{L_p(S/G)(\Pi^S(s))} \right)$$
where attack strategy $A$ is constructed as in the proof of Prop.~\ref{prop:Ma-Sa}.
\end{proposition}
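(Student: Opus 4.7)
The plan is to convert the sum of logarithms along the path into the logarithm of a single product and then show that this product equals the claimed ratio. Using $\log a + \log b = \log(ab)$, the left-hand side becomes
\[
\log \prod_{i=0}^{|s|-1} \frac{P_{M_a}((x_{3,i},x_{4,i}),s[i+1],(x_{3,i+1},x_{4,i+1}))}{P_{M_n}((x_{1,i},x_{2,i}),\Pi^S(s[i+1]),(x_{1,i+1},x_{2,i+1}))},
\]
so it suffices to establish the two separate identities
\[
\prod_{i=0}^{|s|-1} P_{M_a}(\cdot,s[i+1],\cdot) = L_p(S_A/G)(\Pi^G(s))
\quad\text{and}\quad
\prod_{i=0}^{|s|-1} P_{M_n}(\cdot,\Pi^S(s[i+1]),\cdot) = L_p(S/G)(\Pi^S(s)).
\]

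The denominator identity is the easier of the two. I would argue it by induction on $|s|$, using the recursive definition of $L_p$. The key point is to split cases on the type of $s[i+1]$: when $s[i+1]$ is a regular event or an insertion $ins(e)$, then $\Pi^S(s[i+1])$ is a genuine event in $\Sigma$ and the $M_n$-component of $V$ advances exactly as $R\|_p G$ does on that event; when $s[i+1]$ is a deletion $del(e)$, then $\Pi^S(s[i+1]) = \varepsilon$ and by the convention stated immediately after Eq.~\ref{eq:ratio} the corresponding factor is $1$, matching the fact that $\Pi^S$ drops the deletion. Multiplying these factors telescopes exactly into $L_p(M_n)(\Pi^S(s)) = L_p(S/G)(\Pi^S(s))$.

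The numerator identity is the substantive part of the argument, and is where I expect the main difficulty. The attack strategy $A$ constructed in the proof of Prop.~\ref{prop:Ma-Sa} is designed so that running $A$ on the plant outputs $\Pi^G(s)$ reproduces exactly the insertion/deletion pattern of $s$. I would again proceed by induction on $|s|$, matching each step of the path in $M_a$ to one step of the recursive computation of $L_p(S_A/G)(\Pi^G(s))$. The subtle point is the renormalization factor: in $M_a = R_a\|_p G_a$ the normalization at state $(x_3,x_4)$ is taken over events jointly enabled by $R_a$ at $x_3$ and $G_a$ at $x_4$, whereas in $S_A/G$ the normalization is over events enabled by $S_A = S\circ \Pi^S\circ A$ at the current plant state. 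By the constructions in Eqs.~\ref{eq:ins_plant}--\ref{eq:del_sup}, the state $x_3$ of $R_a$ reached after executing a prefix $s^i$ equals the state of $R$ reached after reading $\Pi^S(A(\Pi^G(s^i)))$, which is precisely the control state determining $S_A$'s decision. Hence the two renormalizations coincide event-by-event, and their product reduces to $L_p(S_A/G)(\Pi^G(s))$. The main obstacle is keeping the bookkeeping clean when several insertions and/or a deletion occur between two consecutive plant events; the cleanest way is probably to group each maximal block of the form $(\Sigma\cup\Sigma_d)\Sigma_i^*$ (as in Prop.~\ref{prop:Ma-Sa}) and verify the matching on each block, since insertion steps carry $P_{G_a}=1$ and thus contribute only through the supervisor's observation update.
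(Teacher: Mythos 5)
Your proposal is correct and follows essentially the same route as the paper: the paper's proof likewise converts the sum of weights into the logarithm of a single product and then identifies the numerator and denominator products with $L_p(S_A/G)(\Pi^G(s))$ and $L_p(S/G)(\Pi^S(s))$, citing the constructions of $M_a$ and $M_n$ together with Prop.~\ref{prop:Ma-Sa}. The only difference is that you spell out, via induction and a case split on insertion/deletion/regular events, the telescoping and renormalization bookkeeping that the paper dispatches with ``by construction.''
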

\begin{proof}
This result follows by the definition of the verifier, the weight function, and the logarithm property of multiplication, i.e., $log(ab) = log(a) +log(b)$.
Recall that each verifier state $x_i$ is defined by $(x_{i,1},x_{i,2},x_{i,3},x_{i,4})$ (Def.~\ref{def:verifier}).
It follows that:
\begin{align*}
\sum_{i=0}^{|s|-1}w(x_i,s[i+1],x_{i+1}) &= \sum_{i=0}^{|s|-1} \log\left(\frac{P_{M_a}((x_{i,3},x_{i,4}),s[i+1],(x_{i+1,3},x_{i+1,4})}{P_{M_n}((x_{i,1},x_{i,2}),\Pi^S(s[i+1]),(x_{i+1,1},x_{i+1,2}))}\right)\\
&= \log\left(\Pi_{i=0}^{|s|-1} \frac{P_{M_a}((x_{i,3},x_{i,4}),s[i+1],(x_{i+1,3},x_{i+1,4})}{P_{M_n}((x_{i,1},x_{i,2}),\Pi^S(s[i+1]),(x_{i+1,1},x_{i+1,2}))}\right)
\end{align*}
By Prop.~\ref{prop:Ma-Sa}, we can construct an attack strategy $A$ such that $\Pi^G(s)\in \lang(S_A/G)$.
Moreover, by construction of $M_a$, we have that $\Pi_{i=0}^{|s|-1} P_{M_a}((x_{i,3},x_{i,4}),s[i+1],(x_{i+1,3},x_{i+1,4}) = L_p(S_A/G)(\Pi^G(s))$.
And by construction of $M_n$, we have $\Pi_{i=0}^{|s|-1} P_{M_n}((x_{i,1},x_{i,2}),\Pi^S(s[i+1]),(x_{i+1,1},x_{i+1,2}) = L_p(S/G)(\Pi^S(s))$.
Therefore, we have:
\begin{align*}
\sum_{i=0}^{|s|-1}w(x_i,s[i+1],x_{i+1})  = \log\left(\frac{L_p(S_A/G)(\Pi^G(s))}{L_p(S/G)(\Pi^S(s))}\right)
\end{align*}
\end{proof}

The shortest path problem is a well-known problem in graph theory with polynomial-time algorithm solutions, e.g., Bellman-Ford, and Dijkstra's algorithms \citep{cormen2022introduction}.
Since $V$ can have negative and positive weights, we use the Bellman-Ford algorithm to compute the shortest path in $V$.
The Bellman-Ford algorithm outputs either: (1) a vector, $\mathbf{sp} \in \mathbb{R}^{|X_{m,V}|}$, storing the smallest real values for paths from the initial state to marked states, or (2) an output saying that the graph has a ``negative cycle."

Back to our running example, we run the Bellman-Ford algorithm using the weighted verifier $V$ depicted in Fig.~\ref{fig:verifier}.
The verifier does not have negative cycles since all cycles have $0$ weight, i.e., $\log(1) = 0$.
The shortest path returns the vector $\mathbf{sp} = [\log(9)]$.

\subsection{Extracting the Intrusion Detection Value}
Solving the shortest path problem for the weighted verifier $V$ gives us the information to solve Problems~\ref{prob:ver-lsa} and~\ref{prob:optimal-lsa}.
Herein, we describe the solution for Problem~\ref{prob:ver-lsa}.
The solution for Problem~\ref{prob:optimal-lsa} follows the same steps.

The shortest path problem can return two possible outputs: (1) a vector, $\mathbf{sp}\in \mathbb{R}^{|X_{m,V}|}$, or (2) ``negative cycle."
In the case of output (2), Problem~\ref{prob:ver-lsa} returns that $M_n$ is not $\lambda$-sa detectable.
The shortest path can be arbitrarily small when $V$ has a negative cycle.
We can select a string in $M_a$ that reaches a marked state that its probability of execution is much smaller than in $M_n$, i.e., $L_p(M_a)(s)<< L_p(M_n)(\Pi^S(s))$.
In other words, we can construct an attacker $A$ as in Prop.~\ref{prop:Ma-Sa} that generates this string with an arbitrarily small detection value. 
In this scenario, there exists an attacker that hides its probabilistic trace to be negligible.

\begin{theorem}\label{theo:negative-cycle}
Let $M_n$, $M_a$, $\lambda \in (0.5,1]$ be given.
If the Bellman-Ford algorithm returns that the weighted verifier $V$ has a negative cycle, then $M_n$ is not $\lambda$-sa detectable.
\end{theorem}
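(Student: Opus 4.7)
\medskip

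\noindent\textbf{Proof proposal.} The plan is to exploit a negative cycle in $V$ to construct an arbitrarily long marked string whose detection value tends to $0$, and then invoke Prop.~\ref{prop:det_lang} to lift this to a witness attack strategy.

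First, I would unpack what Bellman--Ford reports. A negative cycle detected by the algorithm is reachable from $x_{0,V}$; moreover, since $V$ has already been trimmed (Def.~\ref{def:verifier}), every state in $V$ is co-accessible, so from any state on the cycle there is a path to some marked state $x_m \in X_{m,V}$. Concretely, I would fix a path $\pi_1$ from $x_{0,V}$ to a state $x^\ast$ on the negative cycle $C$, and a path $\pi_2$ from $x^\ast$ to a marked state $x_m$. For every $k \in \mathbb{N}$, the concatenation $\rho_k := \pi_1 \, C^k \, \pi_2$ is a valid marked path in $V$, and its string label $s_k \in \mathcal{L}_m(V)$ has total weight
\begin{equation*}
W(\rho_k) = W(\pi_1) + k \cdot W(C) + W(\pi_2),
\end{equation*}
where $W(C) < 0$, so $W(\rho_k) \to -\infty$ as $k \to \infty$.

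Next I would translate this weight statement back to probabilities. By Prop.~\ref{prop:det_lang}, for each $k$ there exists $A_k \in \Psi_A$ such that $\Pi^G(s_k) \in L_{det}^{A_k}$, and by the construction in the proof of Prop.~\ref{prop:Ma-Sa} we may take $A_k$ so that $A_k(\Pi^G(s_k)) = s_k$, hence $\Pi^S(A_k(\Pi^G(s_k))) = \Pi^S(s_k)$. Then Prop.~\ref{prop:weight-Ma} gives
\begin{equation*}
W(\rho_k) \;=\; \log\!\left(\frac{L_p(S_{A_k}/G)\bigl(\Pi^G(s_k)\bigr)}{L_p(S/G)\bigl(\Pi^S(s_k)\bigr)}\right),
\end{equation*}
so $W(\rho_k) \to -\infty$ implies that the ratio $r_k := L_p(S_{A_k}/G)(\Pi^G(s_k)) / L_p(S/G)(\Pi^S(s_k))$ tends to $0$. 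Substituting into Eq.~\ref{eq:detection_level},
\begin{equation*}
det\bigl(\Pi^G(s_k)\bigr) \;=\; \frac{r_k}{r_k + 1} \;\xrightarrow[k\to\infty]{}\; 0 .
\end{equation*}
Since $\Pi^G(s_k) \in L_{det}^{A_k}$ and $A_k \in \Psi_A$, the double infimum in Eq.~\ref{eq:likelihood} satisfies $dtc \le det(\Pi^G(s_k))$ for every $k$, hence $dtc = 0 < 0.5 < \lambda$, and $M_n$ fails to be $\lambda$-sa detectable.

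The main obstacle I anticipate is bookkeeping rather than a conceptual hurdle: one must justify that the cycle $C$ lies on a path from the initial state to a \emph{marked} state (handled by accessibility + co-accessibility of $V$ after trimming), and that the attack strategy supplied by Prop.~\ref{prop:Ma-Sa} is not merely complete and consistent but lies in $\Psi_A$, i.e., is also successful with $\Pi^G(s_k) \in L_{det}^{A_k}$ --- this is exactly the content of Prop.~\ref{prop:det_lang}, so the argument reduces cleanly to invoking results already established. The remaining care is to observe that different $k$ may produce different attack strategies $A_k$, which is harmless because $dtc$ is an infimum over all of $\Psi_A$.
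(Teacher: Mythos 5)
Your proposal is correct and follows essentially the same route as the paper's proof: use the negative cycle to obtain marked strings of arbitrarily negative weight, convert the weight to the probability ratio via Prop.~\ref{prop:weight-Ma}, conclude the detection value drops below $\lambda$, and invoke Prop.~\ref{prop:det_lang} to exhibit a witnessing attack strategy. The only difference is that you make explicit the cycle-pumping construction and the accessibility/co-accessibility bookkeeping that the paper leaves implicit, and you drive $dtc$ all the way to $0$ rather than merely below $\log\bigl(\tfrac{\lambda}{1-\lambda}\bigr)$ in weight; both are sound.
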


\proof
When $V$ has a negative cycle, the shortest path has a limit as $-\infty$.
Thus, we can find a path $s\in \lang_m(V)$ with path $\rho = x_0s[1]x_1\dots s[|s|]x_{|s|+1}$ with weight smaller than $\log(\frac{\lambda}{1-\lambda})$.
$$\sum_{i=0}^{i<|s|}w(x_i,s[i],x_{i+1})< \log\left(\frac{\lambda}{1-\lambda}\right)$$
Using Prop.~\ref{prop:weight-Ma}, we have:
\begin{align*}
\log\left(\frac{L_p(S_A/G)(s)}{L_p(S/G)(\Pi^S(s))}\right)< \log\left(\frac{\lambda}{1-\lambda}\right)\\
\frac{L_p(S_A/G)(\Pi^G(s))}{L_p(S/G)(\Pi^S(s))}<\left(\frac{\lambda}{1-\lambda}\right)
\end{align*}
Manipulating the equation above, we get:
\begin{align*}
\frac{L_p(S_A/G)(\Pi^G(s))}{L_p(S_A/G)(\Pi^G(s))+L_p(S/G)(\Pi^S(s))}< \lambda 
\end{align*}
Since $s\in \lang_m(V)$ and by Prop.~\ref{prop:det_lang}, the system $M_n$ is not $\lambda$-sa detectable.
\endproof

In the case the Bellman-Ford algorithm returns the vector $\mathbf{sp}\in \mathbb{R}^{|X_{m,V}|}$, we can find the value $dtc$ in Eq.~\ref{eq:likelihood}.
Depending on this calculated value, we return (i) $M_n$ is $\lambda$-sa detectable, or (ii) $M_n$ is \textbf{not} $\lambda$-sa detectable.
We calculate the value $dtc$ with the lowest value $val$ in the vector $\mathbf{sp}$, i.e., $val = \min \textbf{sp}$.
\begin{align}
\frac{L_p(S_A/G)(\Pi^G(s))}{L_p(S/G)(\Pi^S(s))} = \exp(val)\nonumber\\
L_p(S_A/G)(\Pi^G(s)) = \exp(val)L_p(S/G)(\Pi^S(s))\label{eq:relation-ratio}
\end{align}
Where $s$ is the string in $V$ generated by the shortest path value obtained by the Bellman-Ford algorithm.
Using the relation in Eq.~\ref{eq:relation-ratio}, Eq.~\ref{eq:likelihood} gives us the value of $dtc$.
\begin{align}
dtc &= \frac{\exp(val)L_p(S/G)(\Pi^S(s))}{L_p(S/G)(\Pi^S(s))(1+\exp(val))}\nonumber \\
& =\frac{exp(val)}{1+exp(val)} \label{eq:final-int}
\end{align}

Therefore, the shortest path value $val$ to a marked state in $V$ provides us the $dtc$ as in Def.~\ref{def:lambda-sa-det}.
Formally, we have the following theorem.

\begin{theorem}\label{theo:vector-shortest}
Let $M_n$, $M_a$, $\lambda\in (0.5,1]$ be given.
Also, let the Bellman-Ford algorithm for the weighted verifier $V$ return vector $\mathbf{sp}\in \mathbb{R}^{|X_{m,V}|}$ with $val$ being the lowest value in $\mathbf{sp}$.
The system $M_n$ is $\lambda$-sa detectable if and only if 
\begin{equation}\label{eq:thm}
\frac{exp(val)}{1+exp(val)}\geq \lambda
\end{equation}
\end{theorem}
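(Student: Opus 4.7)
The plan is to show that the map $f(x) = \frac{\exp(x)}{1+\exp(x)}$ is strictly increasing on $\mathbb{R}$, so that minimising $det(s)$ over the relevant class of strings is equivalent to minimising the log-probability ratio of Prop.~\ref{prop:weight-Ma}. With this observation, the shortest path in $V$ identifies the worst-case $(A,s)$ pair realising the infimum in Eq.~\ref{eq:likelihood}.

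More concretely, I would first rewrite $det(s)$ from Eq.~\ref{eq:detection_level} by dividing numerator and denominator by $L_p(S_A/G)(s)$, obtaining
\begin{equation*}
det(s) \;=\; \frac{1}{1 + \exp(-z(s))}, \qquad z(s) := \log\!\left(\frac{L_p(S_A/G)(s)}{L_p(S/G)(\Pi^S(A(s)))}\right),
\end{equation*}
which is exactly $f(z(s))$. Since $f$ is strictly increasing, $\inf_{A,s} det(s) \geq \lambda$ iff $\inf_{A,s} z(s) \geq \log(\lambda/(1-\lambda))$.

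The next step is to match this infimum with the shortest path in $V$. For this I would invoke the two propositions already established. By Prop.~\ref{prop:det_lang}, every marked string $s^V \in \lang_m(V)$ corresponds to some $A \in \Psi_A$ with $\Pi^G(s^V) \in L_{det}^A$, and conversely every $s \in L_{det}^A$ for every $A \in \Psi_A$ is realised by some marked string in $V$ (using the attacker construction of Prop.~\ref{prop:Ma-Sa}). By Prop.~\ref{prop:weight-Ma}, the total path weight of $s^V$ in $V$ equals $z(\Pi^G(s^V))$. Hence the set of values $\{z(s) : A \in \Psi_A,\ s \in L_{det}^A\}$ coincides with the set of total path weights of marked paths in $V$. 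Since Bellman-Ford returns the minimum path weight $val$ over all marked states (no negative cycles are assumed in this case), we get $\inf_{A,s} z(s) = val$, and applying $f$ yields $dtc = f(val) = \exp(val)/(1+\exp(val))$.

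The forward and reverse implications then follow immediately: $M_n$ is $\lambda$-sa detectable iff $dtc \geq \lambda$ iff $\exp(val)/(1+\exp(val)) \geq \lambda$, which is Eq.~\ref{eq:thm}. The main subtlety I anticipate is making precise that the infimum over attack strategies is in fact attained (or at least witnessed up to arbitrary precision) by marked paths in the finite automaton $V$; this hinges on the bidirectional correspondence supplied by Propositions~\ref{prop:Sa-Ma}, \ref{prop:Ma-Sa}, and~\ref{prop:det_lang}, together with the fact that $|X_{m,V}|$ is finite so that the minimum, rather than just an infimum, is achieved by Bellman-Ford. A minor bookkeeping point is that $z(s)$ uses $\Pi^S(A(s))$ while Prop.~\ref{prop:weight-Ma} uses $\Pi^S(s^V)$; these agree because, by the construction of $V$ and of $A$ in Prop.~\ref{prop:Ma-Sa}, $A(\Pi^G(s^V)) = s^V$, so $\Pi^S(A(\Pi^G(s^V))) = \Pi^S(s^V)$.
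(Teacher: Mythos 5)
Your proposal is correct and rests on the same ingredients as the paper's proof: the correspondence between marked strings of $V$ and detection-language strings (Propositions~\ref{prop:Sa-Ma}--\ref{prop:det_lang}), the weight identity of Proposition~\ref{prop:weight-Ma}, and the minimality of the Bellman--Ford value. The only difference is organizational — you derive the single identity $dtc = \exp(val)/(1+\exp(val))$ directly via monotonicity of $x\mapsto \exp(x)/(1+\exp(x))$, whereas the paper proves each implication separately by contradiction — and your flagged subtleties (attainment of the infimum, and the agreement of $\Pi^S(A(s))$ with $\Pi^S(s^V)$) are exactly the right points to check.
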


\proof \quad 
We start by proving $\Rightarrow$: If $M_n$ is $\lambda$-sa detectable, then Eq.~\ref{eq:thm} holds.
First, we assume that $M_n$ is  $\lambda$-sa detectable, which implies that $dtc = \inf_{A\in\Psi_A}\inf_{s\in L^A_{det}} det(s) \geq \lambda$.
Therefore, there exists an attack strategy $A\in \Psi_A$ and string $s\in L_{det}^A$ such that for all other $A'\in \Psi_A$ and $t\in L_{det}^{A'}$: $det(s)\leq det(t)$.
Let $A$ and $s$ be the attack strategy and string that satisfy the proposition above.
Using Prop.~\ref{prop:Sa-Ma}, we can find a string $u = A(s)\in \lang_m(V)$ such that $\Pi^G(u) = s$.
By Prop.~\ref{prop:weight-Ma}, we have: 
\begin{equation}\label{eq:w_u}
w_u = \exp\left(\sum_{i=0}^{|u|-1} w(x_i,u[i+1],x_{i+1})\right)  = \frac{L_p(S_A/G)(s)}{L_p(S/G)(\Pi^S(s))}
\end{equation}
where $u$ generates path $x_0u[1]\dots u[|u|]x_{|u|}$ in $V$.
Similar to Eq.~\ref{eq:final-int}, we can write the detection value of $s$ as $det(s) = \frac{w_u}{1+w_u}$.
Since we assumed that $M_n$ is $\lambda$-sa detectable $\frac{w_u}{1+w_u}\geq \lambda$.
It remains to be proved that $w_u = \exp(val)$.

We show that $w_u = \exp(val)$ by contradiction.
By the solution of the Bellman-Ford algorithm, we can construct a string $v\in \lang_m(V)$ such that $w_v = \exp(val)$, where $w_v$ is the weight of string $v$ calculated as in Eq.~\ref{eq:w_u}
Assume that $w_u\neq \exp(val)$.
If $w_u<\exp(val)$, we found a new path in $V$ with a weight smaller than the shortest path $val$, i.e., a contradiction.
If $w_u>\exp(val)$, we found a string $v\in \lang_m(V)$ with weight smaller than $w_u$.
Using Prop.~\ref{prop:weight-Ma}, we can calculate $det(\Pi^G(v)) = \frac{w_v}{w_v+1}$.
Since $w_u>w_v$, $w_u>0$, and $w_v>0$, it follows: 
\begin{equation}\label{eq:ineq-cont}
det(s) = det(\Pi^G(u)) = \frac{w_u}{w_u+1}> \frac{w_v}{w_v+1}>det(\Pi^G(v))
\end{equation}
However, Eq.~\ref{eq:ineq-cont} contradicts that $det(s)$ is the smallest detection value as we assumed above.
Therefore, it must be that $w_u = w_v = \exp(val)$ which concludes our $\Rightarrow$ proof.

Next, we need to prove $\Leftarrow$: If Eq.~\ref{eq:thm} holds, then $M_n$ is $\lambda$-sa detectable.
We can prove this statement by contradiction.
Let us assume that Eq.~\ref{eq:thm} holds and that $M_n$ \emph{is not} $M_n$ $\lambda$-sa detectable.
Since $M_n$ is not $\lambda$-sa detectable, then there exist an attack strategy $A$ and string $s\in L_{det}^A$ such that $det(s)<\lambda$.
Using Prop.~\ref{prop:Sa-Ma}, we can find an string $t\in \lang_m(V)$ such that $A(s) = t$ and $det(s) = \frac{w_t}{1+w_t}$ where $w_t$ is calculated as in Eq.~\ref{eq:w_u}.
We show that $w_t$ will be smaller than $exp(val)$, which contradicts the assumption that $val$ is the smallest weight in $V$.

Let $v\in \lang_m(V)$ such that $w_v = exp(val)$, where $w_v$ is the weight of string $v$ similar to Eq.\ref{eq:w_u}.
With $w_t$, it follows that $det(\Pi^G(v)) = \frac{w_v}{w_v+1}>\lambda$ by our assumption that Eq.~\ref{eq:thm} holds.
Using $det(s)<\lambda$ and $det(\Pi^G(v))>\lambda$, it follows that  $\frac{w_t}{1+w_t}<\frac{w_v}{w_v+1}$.
The last inequality implies that $w_t<w_v = \exp(val)$ since $w_v>0$ and $w_t>0$, i.e., string $t$ has weight less than $val$.
However, $val$ is the smallest weight in $V$, i.e., a contradiction.
Therefore, if Eq.~\ref{eq:thm} holds, then $M_n$ is $\lambda$-sa detectable.
This concludes our proof.
\endproof

Back to our running example, the shortest path from the Bellman-Ford algorithm for verifier in Fig.~\ref{fig:verifier} is $\mathbf{sp} = \log(9)$.
Since there is a single marked state, $val = \log(9)$.
Following Thm.~\ref{theo:vector-shortest}, we have:
\begin{equation}
\frac{\exp(val)}{1+\exp(val)}=0.9\geq 0.9
\end{equation}
Thus, the system $M_n$ is $0.9$-safe.

\subsection{Complexity of $\lambda$-sa detectability}
Our last result is related to the complexity of solving Problems~\ref{prob:ver-lsa} and~\ref{prob:optimal-lsa} due to the Bellman-Ford algorithm.
\begin{theorem}\label{theo:complexity}
Solving Problem~\ref{prob:ver-lsa} has worst-case time-complexity of $O(|X_V|^2) = O(|X_G\times X_R|^4)$.
\end{theorem}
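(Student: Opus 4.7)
The plan is to show that the running time of the verification procedure is dominated by the Bellman--Ford shortest-path computation on the weighted verifier, and then to bound both the number of states and edges of $V$ in terms of $|X_G \times X_R|$.

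First I would account for each construction step. The nominal system $M_n = R||_p G$ has at most $|X_R|\cdot|X_G|$ states and can be built in time linear in its size. The attacked plant $G_a$ and attacked supervisor $R_a$ are obtained from $G$ and $R$ by adding self-loops (for insertions) and deletion transitions as specified in Eqs.~\eqref{eq:ins_plant}--\eqref{eq:del_sup}; thus $|X_{G_a}|=|X_G|$ and $|X_{R_a}|=|X_R|$, and both are built in linear time in the size of the original automata (treating the alphabet as fixed). The attacked system $M_a = R_a||_p G_a$ therefore has at most $|X_R|\cdot|X_G|$ states. Building the verifier $V$ according to Def.~\ref{def:verifier} amounts to a standard reachability on the product state space $X_R\times X_G\times X_{R_a}\times X_{G_a}$, yielding $|X_V|\le |X_R|\cdot |X_G|\cdot|X_{R_a}|\cdot|X_{G_a}| = |X_G\times X_R|^2$. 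The weight function $w$ in Eq.~\eqref{eq:ratio} is computed in constant time per transition of $V$.

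Next I would analyze Bellman--Ford on $V$. Its complexity is $O(|X_V|\cdot|E_V|)$, where $E_V$ is the transition relation of $V$. Because $V$ is deterministic and $\Sigma_m$ is fixed, each state has at most $|\Sigma_m|$ outgoing transitions, so $|E_V| = O(|X_V|)$. Hence Bellman--Ford runs in $O(|X_V|^2)$. Finally, extracting $dtc$ from the output $\mathbf{sp}$ and checking $\tfrac{\exp(val)}{1+\exp(val)}\ge \lambda$ as in Thm.~\ref{theo:vector-shortest} (or reporting a negative cycle as in Thm.~\ref{theo:negative-cycle}) adds only $O(|X_{m,V}|)\le O(|X_V|)$ extra work.

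Combining these bounds, the dominant term is Bellman--Ford, giving a total worst-case time complexity of
\[
O(|X_V|^2) \;=\; O\!\left(\left(|X_R|\cdot|X_G|\right)^{\!4}\right) \;=\; O(|X_G\times X_R|^4),
\]
as claimed. No single step is conceptually hard; the main care required is to verify that $|X_{R_a}|$ and $|X_{G_a}|$ do not grow beyond $|X_R|$ and $|X_G|$ (since only transitions, not states, are added in Eqs.~\eqref{eq:ins_plant}--\eqref{eq:del_sup}), and that the alphabet $\Sigma_m = \Sigma\cup\Sigma_i\cup\Sigma_d$ can be treated as a constant so that the edge count of the deterministic verifier $V$ is linear in $|X_V|$; otherwise a factor of $|\Sigma|$ must be carried through.
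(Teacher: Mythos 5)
Your proposal is correct and matches the paper's argument, which likewise reduces the bound to the cost of Bellman--Ford on the verifier $V$ with $|X_V|\le |X_G\times X_R|^2$; the paper's own proof is a one-line appeal to exactly these two facts. Your additional care about $|X_{G_a}|=|X_G|$, $|X_{R_a}|=|X_R|$, and the edge count being linear in $|X_V|$ for a fixed alphabet $\Sigma_m$ only makes explicit what the paper leaves implicit.
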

\begin{proof}
This result follows by the complexity of the Bellman-Ford algorithm and the construction of $V$.
\end{proof}

\section{Conclusion}\label{sect:conclusion}
This paper investigated a new sensor attack detection notion using probabilistic information. 
We proposed the notion of $\lambda$-sa detectability that ensures probabilistic detection for all complete, consistent, and successful sensor attack strategies.
This notion generalizes the previously defined $\epsilon$-safety, introduced in \citep{meira-goes:2020towards,Fahim2024-wodes}, by considering general classes of sensor attack strategies instead of a single strategy as in $\epsilon$-safety.
We show that $\lambda$-sa detectability can be verified in polynomial time by reducing our verification problem to a shortest path problem in a graph.
We leave for future work considering the detection of both sensor and actuator attacks. 
\bibliography{romulo}

\appendix
\section{Appendix}
\subsection{Probabilistic Parallel Composition $||_p$} \label{app:parallel_prob}
\begin{definition}
The probabilistic composition, $||_p$ between PDES $G$ and DFA $R$ is defined by $R||_p G = (X_R\times X_G, \Sigma, \delta_{R,G},P_{R,G}, (x_{0,R},x_{0,G}), X_{R}\times X_{m,G})$ where:
\begin{equation}
P_{R,G}((x_R,x_G),e,(y_R,y_G)) = \left\lbrace
\begin{array}{ll}
\frac{P_G(x_G,e)}{\sum_{\sigma\in\Gamma_G(x_R)\cap\Gamma_R(y_R)}P_G(x_G,\sigma)} &
\ e\in \Gamma_R(x_R)\cap\Gamma_G(x_G) \\
0 & \text{otherwise}
\end{array}
\right.
\end{equation}
\end{definition}

\subsection{Construction of attack strategy using automaton $A$}\label{app:A_aut}

An automaton $A= (X_A, \Sigma_m, \delta_{A},x_{0,A})$ encodes an attack strategy if the following condition holds:
\begin{enumerate}
    \item[(1)] $\forall x\in X_A$, $e\in \Sigma_i$. $\delta_A(x,e)! \Rightarrow \forall e\in \Sigma_m\setminus\{e\}.\  \delta_A(x,e)\hspace{-0.1cm}\not{!}$ 
    \item[(2)] $\forall x\in X_A$, $e\in \Sigma_a$. $\delta_A(x,e)! \Leftrightarrow \delta_A(x,del(e))\hspace{-0.1cm}\not{!}$ 
\end{enumerate}
Condition (1) ensures that states with an insertion event do not have any other transitions. 
Since the attacker is deterministic, it can only insert one event when it has ``decided" to insert.
Condition (2) ensures that the attacker can only delete a compromised event or not attack this event, i.e., it cannot have both options since it is deterministic.

Next, we show how to extract an attack strategy from an automaton $A$.
First, for state $q\in X_A$, the function $Ins(q)$ returns a string of insertion events that can be executed from $q$ until it reaches a state in $A$ where it cannot insert events.
The function $Ins(q)$ is defined recursively as follows:
\begin{equation}
Ins(q) = \left\{ 
\begin{array}{cc}
  eIns(\delta_A(q,e))   & \text{if } \Gamma_A(q) = \{e\} \wedge e\in \Sigma_i\\
  \epsilon  &  \text{if } \Gamma_A(q)\cap \Sigma_i = \empty
\end{array}
\right.
\end{equation}

Now, we are ready to extract a strategy from $A$.
\begin{definition}
Given DFA $A = (X_A, \Sigma_m, \delta_{A},x_{0,A})$ satisfying the conditions above.
First, the strategy for $\epsilon$ is $A(\epsilon,\epsilon) = Ins(x_{0,A})$, i.e., a string of insertions from the initial state if they are present in A.
Next, the attack strategy generated by $A$ is constructed as follows for any state $q\in X_A$ and event $e\in \Sigma$:

\begin{equation}\label{eq:}
A(q,e) = \left\lbrace
\begin{array}{ll}
del(e) Ins(\delta_A(q,del(e)) & \text{if } del(e) \in \Gamma_A(q)\\
e Ins(\delta_A(q,e)) & \text{if } e\in \Gamma_A(q)\\
\text{undefined} & \text{otherwise}
\end{array}
\right.
\end{equation}
\end{definition}
The strategy is defined for state $q$ after observing event $e\in \Gamma_A(q)\cap \Sigma$ is: (1) maintain observation of event $e$ followed by a possible string of insertion event, or (2) replace $e$ with $del(e)$ followed by a possible string of insertion event.
Lastly, we can extend the function $A(q,e)$ for strings as in Def.~\ref{def:attack_str} for any  $s\in \lang(A)$ as $A(s,e) = A(\delta_A(x_{0,A},s),e)$.
The attack strategy is also undefined for any string $s\notin \lang(A)$.

\end{document}